\newcommand{\calb}{\mbox{$\mathcal{B}$}}
\newcommand{\calc}{\mbox{$\mathcal{C}$}}
\newcommand{\calcp}{\mbox{$\mathcal{C}'$}}
\newcommand{\cald}{\mbox{$\mathcal{D}$}}
\newcommand{\calf}{\mbox{$\mathcal{F}$}}
\newcommand{\call}{\mbox{$\mathcal{L}$}}
\newcommand{\calp}{\mbox{$\mathcal{P}$}}
\newcommand{\cals}{\mbox{$\mathcal{S}$}}
\newcommand{\calsp}{\mbox{$\mathcal{S}'$}}
\newcommand{\psp}{\mbox{\calp(\cals)}}
\newcommand{\cnp}{\textbf{NP}}
\newcommand{\mc}[1]{\mathcal{#1}}
\newcommand{\bbb}{\mbox{$\mathbb{B}$}}
\newcommand{\tand}{\texttt{AND}}
\newcommand{\tor}{\texttt{OR}}
\newcommand{\txor}{\texttt{XOR}}
\newtheorem{theorem}{Theorem}[section]
\newtheorem{proposition}[theorem]{Proposition}
\newtheorem{lemma}[theorem]{Lemma}
\newtheorem{corollary}[theorem]{Corollary}
\newmdenv[backgroundcolor=gray!10, topline=false, bottomline=false, rightline=false, innerlinewidth=0.35pt, linecolor=black, roundcorner=2pt,innerleftmargin=2pt,
innerrightmargin=1pt
]{mybox-grey}
\title[Multi-Agent Systems Over Multilayer Networks]
{On Some Fundamental Problems for Multi-Agent Systems
     Over Multilayer Networks}
\author{Daniel J. Rosenkrantz}
\affiliation{
   \institution{University of Virginia} 
   \city{Charlottesville, VA}
   \country{USA}}
   \email{drosenkrantz@gmail.com}
\author{Madhav V. Marathe}
\affiliation{
   \institution{University of Virginia} 
   \city{Charlottesville, VA}
   \country{USA}}
   \email{marathe@virginia.edu}
\author{Zirou Qiu}
\affiliation{
   \institution{University of Virginia} 
   \city{Charlottesville, VA}
   \country{USA}}
   \email{zq5au@virginia.edu}
\author{S.~S. Ravi}
\affiliation{
   \institution{University of Virginia} 
   \city{Charlottesville, VA}
   \country{USA}}
   \email{ssravi0@gmail.com}
\author{Richard E. Stearns}
\affiliation{
   \institution{University of Virginia} 
   \city{Charlottesville, VA}
   \country{USA}}
   \email{thestearns2@gmail.com}
\begin{abstract}
Many researchers have considered multi-agent systems over single-layer
networks as models for studying diffusion phenomena.  Since real-world
networks involve connections between agents with different semantics
(e.g., family member, friend, colleague), the study of multi-agent
systems over multilayer networks has assumed importance.  Our focus
is on one class of multi-agent system models over multilayer networks,
namely  multilayer synchronous dynamical systems (MSyDSs).  We study
several fundamental problems for this model.  We establish properties
of the phase spaces of MSyDSs and bring out interesting differences
between single-layer and multilayer dynamical systems.  We show
that, in general, the problem of determining whether two given
MSyDSs are inequivalent is \cnp-complete.  This hardness result
holds even when the only difference between the two systems is the
local function at just one node in one layer.  We also present
efficient algorithms for the equivalence problem  for restricted
versions of MSyDSs (e.g., systems where each local function is a
bounded-threshold function, systems where the number of layers is
fixed and each local function is symmetric).  In addition, we
investigate the expressive power of MSyDSs based on the number of
layers.  In particular, we examine conditions under which a system
with $k \geq 2$ layers has an equivalent system with $k-1$ or fewer
layers.
\end{abstract}
\keywords{Multilayer networks, Multi-agent systems, 
Dynamical systems, Equivalence problem,
Expressive power, Complexity, Algorithms}
\newcommand{\BibTeX}{\rm B\kern-.05em{\sc i\kern-.025em b}\kern-.08em\TeX}
\begin{document}


\pagestyle{fancy}
\fancyhead{}


\maketitle 

\baselineskip=\normalbaselineskip

\section{Introduction}
\label{sec:intro}

\subsection{Background and Motivation}

Motivated by the spread of epidemics, information and social behavior in populations,
the topic of contagion propagation over networks
has been studied extensively in the literature
(see e.g., \cite{EK-2010,granovetter1978threshold,watts2002simple,wasserman1994social,newman2018networks}).
When the contagion represents information about
a product produced by an organization, the propagation
models enable the organization to develop
strategies for maximizing the spread of 
the information (see e.g., \cite{kempe2003maximizing}).
When the contagion is an epidemic, the propagation
models enable public
health officials to take appropriate actions to contain the spread of the epidemic 
(see e.g., \cite{adiga2022ai,lefevere2022ai}).
Such studies  typically use networked
dynamical systems models, where nodes
represent individuals and edges represent relationships
that enable a contagion to spread in the
underlying network
(see e.g., \cite{adiga2019graphical,Adiga-etal-wsc-2019,Barrett-etal-2011}).
Each node has an active or inactive state 
which indicates whether or not the node has acquired
the contagion.
The \textbf{configuration} of the system at any time
is the vector of state values of nodes at that time.
The interaction between a node and its neighbors is
based on an appropriate local function, and this interaction
may change the state of the node.
Throughout this paper, we will assume that nodes compute
and update their states \emph{synchronously}.
The time evolution of the system is represented by
a sequence of successive configurations.
The global behavior of a dynamical system is captured by 
its \textbf{phase space}, which is a directed graph in which
each node is a configuration and each directed edge indicates
a single step transition from one configuration to another.

This work has generally assumed that the underlying network
consists of a single layer; that is, there is only one type of 
relationship between entities for contagion propagation.
A number of fundamental questions have been studied
over the years for synchronous dynamical systems  
over single-layer networks. Following the literature
(see e.g., \cite{adiga2019graphical}), we refer to these
single-layer systems as SyDSs.
Papers on SyDSs have addressed questions concerning reachability
of configurations (i.e., does a given SyDS
starting from a configuration \calc{} reach a 
configuration \calcp{}?), equivalence (i.e., are two given 
SyDSs equivalent in terms of their phase space?),
existence of \textbf{fixed points} (i.e., configurations in which none
of the nodes changes its state), etc.; see e.g., 
\cite{Adiga-etal-wsc-2019,OU-2017,Chistikov-etal-2020,Rosenkrantz-etal-2015,rosenkrantz2018testing}
and the references cited therein.

A number of papers have pointed out that single-layer
networks are inadequate to capture complex real-world
contagion phenomena.
The reason is that single-layer networks model
just one type of relationship
among individuals.
Instead, one needs to consider 
\underline{multilayer}\footnote{In much of the literature,
the word ``multilayer'' is not hyphenated.
We follow the same practice here.} 
networks (also known as
\underline{multiplex} networks  
\cite{zhang2016survey,kivela2014multilayer,porter-2018})
that capture multi-relational structures
(e.g., family member, friend, colleague, contact in a social network)
among individuals to effectively model contagion propagation
\cite{kivela2014multilayer,hammoud2020multilayer,newman2018networks}.
As a simple example, information that an individual obtains through
a social network such as Facebook or Twitter may play
a role in deciding whether the individual decides to wear a mask or get
vaccinated.
Thus, it is important to study dynamical systems where the
underlying graph has multiple layers.
In standard models of multilayer networks, there is
just one set of nodes; each layer may have a different
set of edges modeling different types of 
relationships (see, e.g., 
\cite{de2013mathematical,de2016physics,porter-2018,del2016spreading,newman2018networks}).
The model for a dynamical system over a multilayer network
is a simple extension of the corresponding model over a
single-layer network.
In such a system, each node has a \emph{single} state value 
but a possibly different 
interaction function for each layer.
At each time instant, after computing the value of the interaction function of the node in each layer, a \textbf{master function} 
uses these values to compute the next state of that node.
As mentioned earlier,
we focus on the \emph{synchronous} update scheme where all the
nodes compute their next states and update their states
simultaneously.
For consistency with previous work \cite{Qiu-etal-icml-2024},
we refer to this model as a \textbf{multilayer synchronous dynamical
system} (MSyDS).
The notions of  configuration and phase space for a MSyDS are
the same as
those for a dynamical system over a single-layer network.
Algorithmic work on multilayer networks has centered around 
defining and
computing appropriate centrality measures 
(see e.g., \cite{de2013mathematical,porter-2018,newman2018networks}).
There has also been some work on estimating the sizes of cascades
under certain contagion models
(e.g., susceptible-infected-recovered or SIR model) through analysis and simulation in multilayer networks
(see e.g., \cite{del2016spreading,newman2018networks} and
the references cited therein).
Additional discussion regarding work on multilayer systems
appears in Section~\ref{sse:related}.
For multilayer dynamical systems, there has been some
work \cite{Qiu-etal-icml-2024} on learning the local interaction functions of such 
systems under the probably approximately correct (PAC)
model \cite{valiant1984theory}.
However, to our knowledge, fundamental issues such as the structure of phase spaces and the equivalence of multilayer dynamical systems
have not been addressed in the literature.

\subsection{Contributions} 
\label{sse:contrib}
Our work takes the first step towards 
exploring some fundamental
aspects of multilayer synchronous dynamical systems (MSyDSs).
A summary of our contributions is provided below.

\begin{enumerate}[leftmargin=*,noitemsep,topsep=0pt] 
\item \underline{\textsf{Phase space properties:}}
We present MSyDSs
whose local functions are  threshold functions but whose phase spaces include long cycles in contrast to
the phase spaces of single-layer SyDSs.
For example, we show (see Section~\ref{sec:phase}) that there
is a MSyDS with two layers such that
all its local functions are threshold functions and
its phase space contains a cycle whose length is exponential
in the number of nodes. 
We also give another construction of a MSyDS with $n$ nodes
and $2n-1$ layers such that all local functions are threshold
functions and the phase space is a cycle 
of length $2^n$.
In contrast, it is known that for single-layer SyDSs where
each local function is a threshold  function, the maximum
length of a cycle in the phase is 
two~\cite{Goles-Martinez-2013}.
Our constructions of MSyDSs with threshold local functions also show that their phase spaces
may contain simple directed paths whose length is exponential in the number of nodes.
(These paths are part of an exponentially long cycle.)
For single-layer SyDSs with $n$ nodes and threshold local functions, it is known that the length of any simple path in the phase space is $O(n^2)$~\cite{BH+06,Goles-Martinez-2013}.

\item \underline{\textsf{Equivalence Problem MSyDSs:}}~ 
Given two MSyDSs \cals{} and \calsp{} on the same set of nodes, 
we say that they are equivalent if their phase spaces (as directed graphs)
are identical.
We show (see Section~\ref{sec:complexity}) that the 
equivalence problem for MSyDSs is \cnp-hard even 
for highly restricted versions of the MSyDSs.
Specifically, our reduction from 3SAT (see \cite{GJ-1979} for
a definition) produces two MSyDSs \cals{} and \calsp{}
over a set of nodes with the following properties:
(i) in each layer, the graphs of \cals{} and \calsp{}
are identical; (ii) the graph in each layer is a star
graph with zero or more isolated nodes;
(iii) all local functions are threshold 
functions; and (iv) the MSyDSs \cals{} and \calsp{}
differ in only the threshold of one node in one layer.
Thus, a minor difference between two MSyDSs is sufficient to
obtain computational intractability for the equivalence
problem for MSyDSs with threshold local functions.
In contrast, it is known that for two single-layer SyDSs
where the underlying graphs are the same and the local
functions are threshold functions, the equivalence
problem can be solved efficiently \cite{Adiga-etal-wsc-2019}.

\item \underline{\textsf{Efficient Algorithms for the Equivalence Problem for}}\\
\underline{\textsf{Restricted Classes of MSyDSs:}}~
We also provide efficient algorithms 
(see Section~\ref{sec:special}) for the equivalence
problem for several special classes of MSyDSs.
These special classes include the following: 
(i) the number of layers in the two MSyDSs is \emph{fixed} 
and each local function is 
symmetric\footnote{Symmetric functions (which properly contain threshold 
functions) are defined in Section~\ref{sec:prelim}.}; 
(ii) each local function is a threshold function, 
the maximum node degree in each layer is \emph{bounded}
and each master function is \texttt{OR} or each master
function is \texttt{AND}.
These algorithms are obtained by reducing 
the equivalence problem for the two systems to a
form of equivalence with respect to each node and 
showing that the node equivalence problem can be solved
efficiently.

\item \underline{\textsf{Expressive Power of MSyDSs:}}~ 
We initiate the study of the expressive power
of MSyDSs (see Section~\ref{sec:expressive}) based on the number of layers.
We observe that for any MSyDS over $k \geq 2$ layers, there is
an equivalent single-layer SyDS with more complex 
local functions.
However, if one also restricts the class of
local functions, we show that there are MSyDSs with $k$
layers for which there is no equivalent MSyDS with fewer
layers.
\end{enumerate}

\noindent
We also observe that many analysis problems (e.g., existence
of fixed points) that are computationally intractable for
single-layer dynamical systems remain so for multilayer
systems as well.

\subsection{Additional Remarks}
\label{sse:addl_remarks}

Long cycles (and paths) in the phase space are indicative
of the complexity of the \textbf{reachability problem} 
for the underlying
dynamical system (i.e., given a multilayer SyDS \cals{} and configurations
\calc{} and \calcp{}, can \cals{} reach \calcp{} 
starting from \calc{}?). As mentioned in
Section~\ref{sse:contrib},  in any single layer system 
with $n$ nodes and threshold
local functions, the length of any phase space path is $O(n^2)$, and
any such path ends in a limit cycle of length at most 2.  So, the
reachability problem for such a system is efficiently solvable.  However,
when the phase space of a system contains exponentially long
cycles/paths, the reachability problem is likely to be computationally
intractable; such complexity results have appeared in several references
(e.g., \cite{BH+06,Chistikov-etal-2020}).
The novelty in our work is that with just two
layers, one can construct multilayer systems with threshold functions
whose phase spaces contain exponentially long cycles (and hence
exponentially long paths). To our knowledge, this is the first work
that formally establishes phase space properties of multilayer systems.

Our hardness result for the equivalence problem for multilayer
systems with threshold functions holds even when there is a
very minor difference between the two systems.  
(As will be seen from the proof in Section~\ref{sec:complexity}, the
only difference between the two systems constructed through a reduction
from the 3SAT problem is in
the threshold value of just one node in one layer.)
We highlight this to point out that while 
the equivalence problem for single layer systems with threshold
functions is efficiently solvable~\cite{Adiga-etal-wsc-2019},
a very minor difference 
suffices to make the problem computationally intractable
for multilayer systems.

Finally, we note that when the number of layers is \emph{fixed},
our efficient algorithm for the Equivalence problem for multilayer 
SyDSs allows \emph{symmetric} local functions which properly contain 
threshold functions.

\subsection{Related Work}
\label{sse:related}

Networked dynamical systems provide a systematic
formal framework for agent-based models (ABM) and
to capture interactions among agents in a network. 
Wellman~\cite{wellman-2016-putting}
discusses the relationships between ABM and multi-agent systems.
Networked dynamical system models have been used by the
multi-agent systems community to study a variety of topics,
including contagion propagation, graphical games and migration due to
catastrophic events (see e.g., 
\cite{li2018modelling,jiang2014diffusion,kearns-2007,mehrab2024network}).
Researchers have also studied formal
aspects of various computational problems for networked
dynamical systems (see e.g., 
\cite{BH+06,Chistikov-etal-2020,Rosenkrantz-etal-2015,rosenkrantz2018testing}).
While this work has been in the context of single-layer systems,
various research issues in the context of multilayer networks
are also being actively pursued by the research community
(see \cite{zhang2016survey} and the references cited therein).
For example, a number of papers have studied structural properties of multilayer networks and
the computation of various centrality measures 
for such networks (see \cite{zhang2016survey,chakraborty2016cross,mittal2017mining} and
the references cited therein).
In addition, there is work on other topics on multilayer networks,
including
reliability issues, analysis of cascades,
enabling cooperation, and applications in various
domains (see e.g., 
\cite{jiang2015reliable,li2015cross,kivela2014multilayer,del2016spreading,hammoud2020multilayer,Stojkoski-etal-2018,qiu2022understanding}).

To our knowledge, the only reference which considers fundamental
computational questions regarding multilayer dynamical systems is
\cite{Qiu-etal-icml-2024}.
As mentioned earlier, this work considers learning the 
local functions of MSyDSs under the PAC model but does not
address the research questions studied in this paper.


\section{Preliminaries}
\label{sec:prelim}

\subsection{Multilayer Synchronous Dynamical Systems}
\label{sse:msyds}

In this section, we present the definitions associated
with multi-layer synchronous dynamical systems.
The notation and terminology used in this section
are based on the presentation 
in~\cite{Qiu-etal-icml-2024}.

Unless otherwise mentioned, the networks considered in our
paper are undirected.
We begin with the definition and notation for multilayer networks.
A \textbf{multilayer network}~\cite{kivela2014multilayer} with $k \geq 1$
layers is a set of graphs $M = \{G_i: 1 \leq i \leq k\}$, 
where $G_i = (V, E_i)$ is the graph in the $i^{\text{th}}$ layer.
Thus, all of the graphs have the same node set $V$ with $n$ nodes, but the
edge sets in different layers may be different.
The definition of a discrete dynamical system over a multilayer network
is a generalization of the corresponding definition for
a single-layer system which has been studied by many
researchers (see e.g., \cite{MR-2007,BH+06,Adiga-etal-wsc-2019}).
Our focus is on networked dynamical systems over the
domain \bbb{} = \{0,1\}.
A \textbf{multilayer synchronous dynamical system} (MSyDS) \cals{}
over the domain \bbb{} has the following components.

\begin{description}[leftmargin=*,noitemsep,topsep=0pt]
\item{(a)} A multilayer network $M = \{G_i(V, E_i) : 1 \leq i \leq k\}$
with $k$ layers.
Each  node $v \in V$ has a \textbf{state} from \bbb{}.

\item{(b)} A collection \calf{} = $\{f_{i,v} : 1 \leq i \leq k,~ v \in V\}$
of functions, with $f_{i,v}$ denoting the \textbf{local interaction function}
for node $v$ in layer $i$.

\item{(c)} A collection $\Psi = \{\psi_v : v \in V\}$ of functions,
with $\psi_v$ denoting the \textbf{master function} of node $v$.
\end{description}

Let $V = \{v_1, v_2, \ldots, v_n\}$.
At any time, the \textbf{configuration} \calc{} of the
system is an $n$-vector $(s_1, s_2, \ldots, s_n)$, where $s_i \in \bbb$
is the state of node $v_i$ at that time, $1 \leq i \leq n$.
Given the configuration at time $t$, the configuration of the system
at time $t+1$ is computed as follows.

\begin{enumerate}[leftmargin=*,noitemsep,topsep=0pt]
\item In each layer $i$, each node $v$ computes the value of its
local function $f_{i,v}$.
The inputs to the function $f_{i,v}$ are the state of $v$ and those of its neighbors in $G_i$, and the output of $f_{i,v}$ is a value in \bbb{}.
We use $W_v(i)$ to denote this  output value.
Since there are $k$ layers, this stage provides a vector $\mathbf{W}_v$ 
of $k$ values
$(W_v(1), W_v(2), \ldots, W_v(k))$ for each node $v \in V$.

\item Then, for each node $v$, its master function $\psi_v$ is evaluated.
The input to $\psi_v$ is $\mathbf{W}_v$ and the output of $\psi_v$
is a value in \bbb{}.
This value becomes the state of $v$ in time step $t+1$.

\item All the nodes carry out the above computations and update
their states \emph{synchronously}.
\end{enumerate}

Let $\calc_t$ and $\calc_{t+1}$ denote the configurations of a MSyDS \cals{} at times $t$ and $t+1$ respectively.
We refer to $\calc_{t+1}$ as the \textbf{successor} 
of $\calc_t$ and 
$\calc_{t}$ as the \textbf{predecessor} of $\calc_{t+1}$.
Since we restrict our attention to deterministic local
and master functions, each configuration of a MSyDS has a \emph{unique} successor;
however, a configuration may have zero 
or more predecessors \cite{Qiu-etal-icml-2024}.

\smallskip 

\noindent
\textbf{A Note About the Graph Model:}~
In real-world applications, multilayer systems
may have a different set of nodes and edges in each layer. When
this happens, one can merge all the node sets into a single set and
have different sets of edges in each layer. Hence, the commonly used
model for multilayer systems 
(e.g., \cite{de2013mathematical,de2016physics,hammoud2020multilayer,jiang2015reliable, kivela2014multilayer}) 
assumes one set of nodes for all the layers but (possibly) 
different sets of edges in different layers. 
Our results also use this single node set model.

\smallskip

\noindent
\textbf{Graph Theoretic Terminology:}~ Let $M$ denote the
underlying multilayer graph of a MSyDS \cals{}. 
Recall that $G_i(V,E_i)$ denotes the
graph in layer $i$ of $M$.
The \textbf{degree} of a node $v$ in layer $i$
is the number of neighbors of $v$ in that layer.
For a node $v$ in layer $i$, the 
\textbf{closed neighborhood} of $v$ in that layer includes $v$
and each node $u$ such that $\{u,v\} \in E_i$.
To discuss the structure of phase spaces of networked
dynamical systems, we will also need some terminology for directed
graphs.
In a directed graph, the \textbf{outdegree} of a node $x$
is the number of outgoing edges from $x$.
The phase space may also contain
\textbf{self loops}, that is, directed edges
of the form $(x,x)$.
A \textbf{simple path} in a directed graph is a directed path
in which all nodes are distinct.  
Similarly, a \textbf{simple cycle} in a directed graph is a directed cycle
in which all nodes are distinct.

\smallskip 

\noindent
\textbf{Local and Master Functions:}~  
Since the domain under consideration is \bbb{} = \{0,1\},
we consider several classes of Boolean functions as local
and master functions. 
The definitions provided below are available in many standard
references such as \cite{Crama-Hammer-2011}.
An important class of Boolean functions studied in the literature
on the spread of social contagions is that of \textbf{threshold} 
functions (see e.g., 
\cite{granovetter1978threshold,watts2002simple,chen2021network}).
For any integer $\tau \geq 0$, the
$\tau$-\textbf{threshold} function has the
value 1 iff at least $\tau$ of its inputs 
have the value 1.
We note that the 1-threshold function is the \tor{} function.
Also, the \tand{} function with $q$ inputs is $q$-threshold function.
When the local function $f_{i,v}$ of a node $v$ is the 
$\tau$-threshold function, we say that its threshold condition
is satisfied if at least $\tau$ of the nodes in the
closed neighborhood of $v$ in $G_i$ are in state~1.
Symmetric Boolean functions are a superset of the class
of threshold functions.
The value of a \textbf{symmetric} Boolean function depends
only on the number of 1's in the input.
Thus, each threshold function is also a symmetric function.
Likewise, the \txor{} function is symmetric.
Each symmetric Boolean function with $q$ inputs
can be specified by a table with $q+1$ rows, with row $i$
specifying the value of the function when the number of 1's
in the input is exactly $i$, $0 \leq i \leq q$.
We refer to this as the \textbf{symmetry table} for
the (symmetric) function.
Our work considers threshold and symmetric Boolean functions 
for local functions as well as master functions.

\begin{figure}
    \centering
    \includegraphics[width=0.55\textwidth]{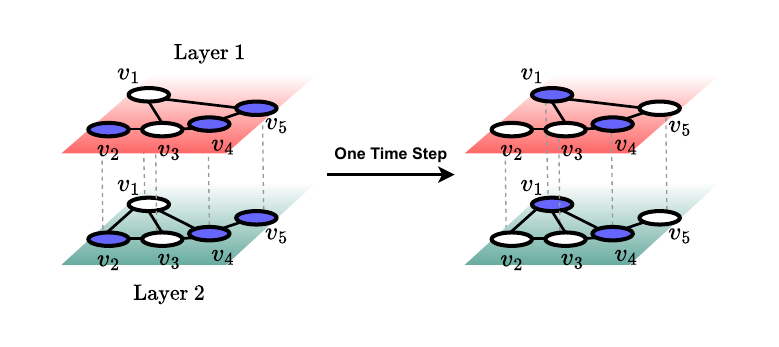}
    \vspace{-20px}
    \caption{An example of a small MSyDS, where state-1 nodes are
    highlighted in blue. The master function at each node is
    \texttt{OR}. The thresholds of the nodes (same for the two
    layers) are: $v_1$ : $2$, $v_2$ : $2$, $v_3$ : $3$, $v_4$ :
    $1$ and $v_5$ : $3$.}
    \label{fig:msyds_example}
\end{figure}

\smallskip 

\noindent
\textbf{An Example for a MSyDS:}~ To illustrate the above definitions,
we present a simple example of a 2-layer MSyDS in
Figure~\ref{fig:msyds_example}.  
One can think of the edges in each of the two layers as representing
different relationships (e.g., family member, friend) between pairs of nodes.
There are 5 nodes in the system
and the local function for each node in each layer is a threshold
function. These threshold values are indicated in the caption of
Figure~\ref{fig:msyds_example}.
For simplicity, we have chosen the same threshold
value for each node in the two layers.
The master function at each node is \tor{}.
The initial configuration of the system is shown
in the left panel of Figure~\ref{fig:msyds_example}),
where nodes in state-1 are highlighted in blue.
Thus, in the initial configuration, nodes $v_2$, $v_4$ and $v_5$
are in state-1 while nodes $v_1$ and $v_3$
are in state~0. 
This configuration \calc{} is represented by $(0, 1, 0, 1, 1)$.

We now explain how the system transitions from one configuration
to the next. 
As mentioned above, let \calc{} = $(0, 1, 0, 1, 1)$ denote
the current configuration.
Consider node $v_1$ which is in state~0 in \calc{}.
In Layer~1, $v_1$ has only neighbor (namely $v_5$) in state~1.
In Layer~2, $v_1$ has two neighbors (namely $v_2$ and $v_4$) in state~1.
Since the threshold of $v_1$ is 2 in 
each layer, the local functions of $v_1$ in 
layers 1 and 2 have values 0 and 1 respectively.
Since the master function is \tor{}, the next
state of $v_1$ is 1.
In a similar manner, one can verify that the
next states of $v_2$, $v_3$, $v_4$ and $v_5$
are 0, 0, 1 and 0 respectively.
Thus, \calcp{}, the successor of \calc{},
is given by $(1, 0, 0, 1, 0)$, as shown in 
the right panel of Figure~\ref{fig:msyds_example}. 
It can also be verified that the successor of \calcp{}
is \calcp{} itself; that is, once the system reaches the 
configuration \calcp{} = $(1, 0, 0, 1, 0)$, no more state
changes occur, and the system stays in that configuration
for ever. Thus, \calcp{} is a \emph{fixed point}
of the system.  \hfill$\Box$

\smallskip 

\noindent
\textbf{Additional terminology regarding dynamical systems:}~
We review some necessary terminology regarding networked dynamical
systems from the literature 
(see e.g., \cite{MR-2007,BH+06,adiga2019graphical}).
Suppose the node set of the underlying multilayer graph of
a MSyDS \cals{} over \bbb{} has $n$ nodes.
Thus, the total number of distinct configurations is $2^n$.
The \textbf{phase space} of \cals{}, denoted by \psp{}, is directed graph with a node for each possible configuration;
for each pair of nodes $x$ and $y$, there is a directed
edge $(x,y)$ in \psp{} iff node $y$ represents the configuration
which is the successor of the configuration represented by 
the node $x$.
A \textbf{transient} in the phase space is a directed path
leading to a cycle.
Since our MSyDSs are deterministic, each configuration
has a unique successor. In other words, the
outdegree of each phase space node is 1.
Thus, the phase space has $2^n$ nodes 
and $2^n$ edges.

\subsection{Problem formulations} 
\label{sse:prob_form}

We study several fundamental questions concerning MSyDSs.
The first of these questions involves the structure of the phase
spaces of MSyDSs.
In particular, we study how the (directed) paths and cycles in 
the phase spaces of MSyDSs can be significantly different compared
to those of single-layer SyDSs, even when the local functions
used by the dynamical systems are threshold functions.
In fact, our results show that one can construct MSyDSs on two-layer networks
whose phase spaces have significantly longer cycles and paths
compared to single-layer SyDSs.

The second question involves the equivalence of two given
MSyDSs.
Let $\mc{S}$ and $\mc{S}'$ be a pair of MSyDSs with the same node set $V$.
We say that $\mc{S}$ and $\mc{S}'$ are {\em equivalent} 
if their phase spaces (considered as directed graphs) are identical.
We formulate the corresponding decision problem as the \emph{inequivalence} problem as follows.

\smallskip

\noindent
\textbf{Inequivalence of MSyDSs:}

\noindent
\underline{\textsf{Given:}} Two MSyDSs \cals{} and \calsp{}
on the same set $V$ of nodes.

\noindent
\underline{\textsf{Question:}} Are the phase spaces of \cals{}
and \calsp{} \emph{different}?

\smallskip 

The reason for considering the Inequivalence problem is that
the problem belongs to the class \cnp{}.
To see this,
define an \textbf{inequivalence witness} as a configuration \calc{} over $V$, such that the successors of $\calc$ under $\mc{S}$ and $\mc{S}'$ 
are different. 
The Inequivalence problem is in \cnp{} since one can guess an inequivalence witness \calc{}, and efficiently verify that the
successors of \calc{} under \cals{} and \calsp{}
are different. We show that the Inequivalence problem is
\cnp-complete even when there is only a minor difference
between the given MSyDSs.
We also present efficient algorithms for restricted
versions of the Inequivalence problem.

The last fundamental question that we address concerns 
the expressive power of MSyDSs in terms of the number of layers.
Specifically, we investigate whether for a given MSyDS with $k$
layers, one can obtain an equivalent MSyDS with fewer layers.
Our results point out that while this can be done if more complex
local functions are permitted, it may not be possible if there
are additional restrictions on the local functions.


\section{Phase Space Properties of Multilayer Systems}
\label{sec:phase}

In this section, we examine phase space properties of multilayer systems.
This examination identifies some interesting differences between the
phase space properties of single-layer and multilayer systems.
We begin with a result which points out that there are 
MSyDSs whose phase spaces contain long cycles.
This construction is also used in  
establishing another result 
(Theorem~\ref{thm:two_layers_exponential_cycles}) of this section.


\begin{theorem}\label{thm:unary-cycles}
For every $n \geq 1$, there is a MSyDS $\mc{S}_n$ with two layers and $n$ nodes, 
where every local function is a threshold function, and every master 
function is symmetric,
with the following properties:
the phase space of $\mc{S}_n$ contains exactly one cycle, 
the length of this cycle is $n+1$,
and every transient is of length one.
\end{theorem}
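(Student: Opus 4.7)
The plan is to exhibit an explicit construction. Take both layer graphs $G_1$ and $G_2$ to be the complete graph $K_n$, set the local function $f_{1,v_j}$ in layer~1 to be the $(j-1)$-threshold function, set $f_{2,v_j}$ in layer~2 to be the $n$-threshold function, and let every master function $\psi_{v_j}$ be \texttt{XOR}. Because the closed neighborhood of every node in each layer equals all of $V$, if $k$ denotes the number of nodes in state~1 in the current configuration, then for each $v_j$ the two layer outputs are $W_{v_j}(1) = [k \ge j-1]$ and $W_{v_j}(2) = [k \ge n]$. The master function \texttt{XOR} (a symmetric Boolean function on two inputs) therefore outputs $1$ iff $j-1 \le k \le n-1$. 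So the next state of $v_j$ depends only on the statistic $k$ and on the index $j$.

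Next I would verify the phase-space structure. Define $c_i$ to be the configuration in which $v_1,\ldots,v_i$ are in state~1 and $v_{i+1},\ldots,v_n$ are in state~0, so $c_i$ has exactly $i$ ones. Applying the rule from $c_i$: if $i < n$, node $v_j$ moves to state~1 exactly when $j \le i+1$, so the successor of $c_i$ is $c_{i+1}$; if $i = n$, every node moves to state~0, and the successor of $c_n$ is $c_0$. This gives a single cycle $c_0 \to c_1 \to \cdots \to c_n \to c_0$ of length exactly $n+1$. For the transients, the key observation is that the successor of any configuration depends only on $k$, so every configuration with exactly $k$ ones maps to $c_{k+1}$ when $k < n$ and to $c_0$ when $k = n$. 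Thus every non-cycle configuration lands on a cycle configuration in one step, so every transient has length $1$; this also implies that no other cycles exist, since the cycle configurations are exactly the possible successors.

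The main obstacle here is that threshold functions are monotone, whereas the behavior ``one new node turns on at each step, then all nodes reset'' is inherently non-monotone. The trick is to let the master function combine two monotone threshold layer outputs into the non-monotone range indicator $[\,j-1 \le k \le n-1\,]$, which \texttt{XOR} accomplishes with just two layers. This is precisely what circumvents the single-layer bound of cycle length at most $2$ for threshold SyDSs due to~\cite{Goles-Martinez-2013}, and it is the property that will be reused in Theorem~\ref{thm:two_layers_exponential_cycles}.
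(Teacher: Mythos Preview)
Your construction and analysis are correct and essentially identical to the paper's: the paper uses the complete graph in both layers, threshold $n$ for every node in one layer, threshold $i-1$ for node $v_i$ in the other layer, and \texttt{XOR} as the master function, so the only difference is that you have swapped the roles of the two layers. Your verification of the cycle and the length-one transients matches the paper's argument (and is in fact spelled out a bit more explicitly).
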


\begin{proof}
Let the nodes of $\mc{S}_n$ be denoted as $v_i$, $1 \leq i \leq n$.
The graph in both layers is a complete graph.
In layer 1, the threshold of each node is $n$.
In layer 2, the threshold of each node $v_i$, $1 \leq i \leq n$, is $i-1$.
Every master function is \texttt{XOR}, which is a symmetric function.

Now, consider the phase space of $\mc{S}_n$.
For each $j$, $0 \leq j \leq n$, let $\calc_j$ denote the configuration of $\mc{S}_n$
such that for each $v_i$, $1 \leq i \leq n$, if $i \leq j$ then $\calc_j[v_i] = 1$,
and if $i > j$ then $\calc_j[v_i] = 0$.
Note that the successor configuration of each 
$\calc_j$ is $\calc_{j+1 \, \mbox{mod} \, (n+1)}$.
Thus, the $n+1$ configurations $\calc_j$, $0 \leq j \leq n$, 
form a phase space cycle of length $n+1$.
Moreover, for any configuration $\calc$, let $|\calc|$ denote the number of 1's in $\calc$,
i.e., the Hamming weight of $\calc$. Let $j' = |\calc| + 1 \, \mbox{mod} \,(n+1)$.
Then, the successor configuration of $\calc$ is $\calc_{j'}$, 
a configuration in the above phase space cycle.
\end{proof}

The next result shows that the cycles in the phase spaces of MSyDSs with just two layers
can be exponentially large in the number of nodes
of the system.

\begin{theorem}\label{thm:two_layers_exponential_cycles}
For MSyDSs with two layers, threshold local functions, and \texttt{XOR} master functions, a phase space cycle can be exponentially large in the number of nodes.
\end{theorem}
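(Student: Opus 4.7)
The plan is to use a Chinese-Remainder-Theorem style product construction: take the vertex-disjoint union of many copies of the construction from Theorem~\ref{thm:unary-cycles} on carefully chosen sizes, so that the phase space of the combined MSyDS is the direct product of the component phase spaces and its longest cycle is the product of the component cycle lengths.

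I would first choose $k$ pairwise coprime integers $c_1,\ldots,c_k \ge 2$; the simplest choice is $c_i = p_i$, the $i$-th prime. For each $i$, I instantiate the system $\mc{S}_{c_i-1}$ from Theorem~\ref{thm:unary-cycles} on a fresh vertex set $V^{(i)}$ with $|V^{(i)}| = c_i - 1$. Each $\mc{S}_{c_i-1}$ is then a two-layer MSyDS on $V^{(i)}$ with threshold local functions, \texttt{XOR} master functions, and a single phase-space cycle of length $c_i$ (with only length-one transients hanging off it). I then form the vertex-disjoint union: $V = \bigsqcup_i V^{(i)}$, and in each layer $\ell \in \{1,2\}$ the edge set is $\bigsqcup_i E^{(i)}_\ell$, while every node retains the threshold local function and the \texttt{XOR} master function it had in its component.

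The central point to verify is that vertex-disjointness forces every closed neighborhood in either layer to lie inside a single component, so each local function is still a legal threshold function on its original closed neighborhood and each node's next state depends only on the states of nodes in its own component. Therefore, the transition function of the combined MSyDS factors as the coordinate-wise product of the component transition functions, and its phase space is (as a directed graph on configurations) isomorphic to the direct product of the component phase spaces. Starting from a configuration whose restriction to each $V^{(i)}$ lies on the length-$c_i$ cycle of $\mc{S}_{c_i-1}$, the combined trajectory returns to its start after exactly $\mathrm{lcm}(c_1,\ldots,c_k) = \prod_i c_i$ steps (using pairwise coprimality), giving a phase-space cycle of that length. Taking $c_i = p_i$ and applying standard prime-counting estimates on $\sum_{i\le k} p_i$ and on $\log \prod_i p_i = \vartheta(p_k)$ then makes this cycle length exponentially large in the total node count $n = \sum_i (p_i - 1)$, as claimed.

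The main obstacle is the decoupling step above: one must carefully confirm that the disjoint-union construction preserves both the threshold character of every local function (this uses that closed neighborhoods stay inside a single component) and the independence of component updates, so that the combined phase space is genuinely a Cartesian product. Once that is in place, the exponentially long cycle is immediate from the pairwise coprimality of the chosen $c_i$ and the length-$c_i$ cycles supplied by Theorem~\ref{thm:unary-cycles}.
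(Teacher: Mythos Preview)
Your proposal is correct and matches the paper's own proof essentially line for line: the paper too takes the disjoint union of the two-layer systems $\mc{S}_{p_j-1}$ from Theorem~\ref{thm:unary-cycles} for the first $q$ primes, observes that the absence of cross-component edges makes the phase space a product, obtains a cycle of length $\prod_{j=1}^q p_j$, and then compares the node count $\sum_j p_j - q = \Theta(q^2\log q)$ with the primorial $\prod_j p_j = \Omega(e^{q\log q})$. If anything, you articulate the decoupling step (closed neighborhoods stay inside a component, so the transition function factors) more carefully than the paper does.
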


\begin{proof}
For an integer $q \geq 1$, let $p_1, p_2, \ldots , p_q$ be the first $q$ primes, 
where we assume that 2 is the first prime. 
Let $\Sigma_q$ be the two-layer MSyDS obtained as the union of the $q$ two-layer MSyDSs 
$\mc{S}_{p_1-1}, \mc{S}_{p_2-1}, \ldots,  \mc{S}_{p_q-1}$, 
each of which is constructed as in Theorem~\ref{thm:unary-cycles}.
More precisely, the node set of $\Sigma_q$ is the union of the node sets of the $q$
constituent MSyDSs, 
the edge set in each layer is the union of the edge sets on that layer of the constituent MSyDSs, 
and each node has the same master function as in its constituent MSyDS.
Note that neither layer of $\Sigma_q$ has an edge between nodes occurring in two distinct constituent MSyDSs.
Since each constituent MsyDS $\mc{S}_{p_j-1}, 1 \leq j \leq q$,
has a phase space cycle of length $p_j$,
$\Sigma_q$ has a phase space cycle of length $\Pi_{j=1}^q p_j$.
In this construction, the number of nodes in the MSyDS is
$\sum_{j=1}^{q} p_j -q$.
It is known that $\sum_{j=1}^{q} p_j$ is
asymptotically $\Theta(q^2 \log{q})$ \cite{Sinha-2015}.
As explained above, the length of a cycle in the
phase space is $\prod_{j=1}^{q} p_j$,
which is asymptotically $\Omega(e^{q\log{q}})$ \cite{primorial-wiki}.
Thus, the length of this cycle is exponential in the number of nodes
in the system.
\end{proof}

The next result shows that there are MSyDSs where all the
nodes in the phase space form a simple cycle. 

\begin{theorem}\label{thm:exponential-cycles}
For every $n \geq 2$, there is a MSyDS $\mc{S}_n$ with $n$ nodes, 
for which every local function is a threshold function, 
and every master function is symmetric,
whose phase space consists of a single cycle, of length $2^n$.
\end{theorem}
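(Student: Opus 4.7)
The plan is to build $\mc{S}_n$ so that its global transition function is the binary-increment map on $\bbb^n$: viewing $(v_1,\ldots,v_n)$ as the integer $\sum_{i=1}^n v_i\,2^{i-1}$, increment modulo $2^n$ is a cyclic permutation, so its functional graph---which is exactly the phase space---is automatically a single directed cycle of length~$2^n$. Writing $A_0=1$ and $A_{i-1}=v_1\wedge\cdots\wedge v_{i-1}$ for $i\ge 2$, this reduces the task to engineering the local updates $v_1'=\neg v_1$ and $v_i'=v_i\oplus A_{i-1}$ for $i\ge 2$ using threshold local functions and symmetric master functions.

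I would use $2n-1$ layers: an edgeless layer $L_0$, together with two layers $L_j^+$ and $L_j^-$ for each $j\in\{2,\ldots,n\}$, in each of which $v_j$ is joined to $v_1,\ldots,v_{j-1}$ and no other edges are present. Thresholds are set as follows: in $L_0$, threshold $1$ at every node (so the output at $v_i$ is $v_i$); in $L_j^+$, threshold $j$ at $v_j$ (so the output is the AND of the size-$j$ closed neighborhood, namely $v_j\wedge A_{j-1}$); in $L_j^-$, threshold $j-1$ at $v_j$ (so the output, call it $T_{j-1}$, is the indicator that the closed neighborhood contains at least $j-1$ ones); and threshold $0$ at every node other than $v_j$ in each $L_j^{\pm}$, producing the constant~$1$.

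The key verification is that at each node, a symmetric master function of the $2n-1$ layer outputs realizes the desired update. At $v_1$, the Hamming weight of the output vector equals $v_1+(2n-2)$, so the symmetric function with $s_1(2n-2)=1$ and $s_1(2n-1)=0$ returns $\neg v_1$. For $i\ge 2$, the $2n-4$ layers $L_j^{\pm}$ with $j\ne i$ contribute constant $1$s, while the informative layers $L_0,L_i^+,L_i^-$ contribute the triple $(v_i,\,v_i\wedge A_{i-1},\,T_{i-1})$. A short case analysis on $(v_i,A_{i-1})$---further splitting $v_i=1,A_{i-1}=0$ by whether the number of ones among $v_1,\ldots,v_{i-1}$ equals $i-2$ or is strictly smaller---shows that the triple's Hamming weight lies in $\{0,1,2,3\}$ and that the desired value $v_i\oplus A_{i-1}$ is $0,1,1,0$ respectively. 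Hence the symmetric master function with $s_i(2n-4),\ldots,s_i(2n-1)=0,1,1,0$ computes $v_i'$ correctly.

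The main obstacle is precisely this last verification: one must rule out collisions in which two configurations with different values of $v_i\oplus A_{i-1}$ share the same Hamming weight among the three informative outputs. Because $v_i\wedge A_{i-1}\le T_{i-1}$ and $v_i\wedge A_{i-1}=1$ forces $v_i=1$, only five of the eight possible $0$-$1$ triples actually arise, and they split cleanly by weight. Once this is checked, the transition function of the constructed MSyDS is binary increment, so its phase space is a single cycle of length $2^n$, as required.
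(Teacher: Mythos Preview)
Your proposal is correct and is essentially the same construction as the paper's: $2n-1$ layers comprising one edgeless layer plus, for each $j\in\{2,\ldots,n\}$, two star layers centered at $v_j$ with thresholds $j$ and $j-1$, so that the global map becomes binary increment modulo $2^n$. The only cosmetic difference is that you set the non-central local functions to threshold~$0$ (constant~$1$), whereas the paper sets them to the constant~$0$; this merely shifts the Hamming weights seen by the master functions from the interval $\{0,1,2,3\}$ (paper) to $\{2n-4,\ldots,2n-1\}$ (you), with the same $0,1,1,0$ pattern in both cases and with $\neg v_1$ realized by \texttt{NOR} in the paper versus your shifted symmetric function at~$v_1$.
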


\begin{proof}
Let the nodes $V$ of $\mc{S}_n$ be denoted as $v_i$, $1 \leq i \leq n$.
For any configuration $\calc$ of $V$, let $\widehat{\calc}$ denote 
the integer encoded by $\calc$, with $\calc[v_1]$ viewed as the low order bit.
Let $\calc'$ denote the successor configuration of $\calc$ under $\mc{S}_n$.
$\mc{S}_n$ will be constructed to have the property that $\widehat{\calc'}$ 
will equal $\widehat{\calc}+1 \, \mbox{mod} \, 2^n$.
This property ensures that the entire phase space of $\mc{S}_n$ 
consists of a single length $2^n$ cycle.

For any node $j$, $1 \leq j \leq n$, 
let $V_j$ denote the set of nodes $\{ v_1, v_2, \ldots , v_j \}$.
Note that $V_j$ corresponds to bit $j$ and the lower bits of $\widehat{\calc}$.

MSyDS $\mc{S}_n$ is constructed as follows.
There are $2n-1$ layers.
The layer 1 graph contains no edges.
For each $j$, $2 \leq j \leq n$, the two layers $j$ and $n+j-1$ contain the same graph.
This graph contains the $j-1$ edges $\{v_j,v_i\}$, $1 \leq i < j$,
i.e., an edge between $v_j$ and each node in $V_{j-1}$.

In layer 1, every node has threshold 1.
In layer $j$, $2 \leq j \leq n$, node $v_j$ has threshold $j-1$,                         
and every other local function is the constant function 0,
i.e. a threshold function with threshold equal to the node degree plus two.
In layer $n+j-1$, $2 \leq j \leq n$, node $v_j$ has threshold $j$,
and every other local function is the constant function 0.   

The master function for node $v_1$ is \texttt{NOR}.
For each node $v_j$, $2 \leq j \leq n$, 
the master function is the symmetric function that is 1 iff
either exactly one or exactly two of its $2n-1$ inputs equal 1.


We claim that for any configuration $\calc$ of $V$,
the successor configuration $\calc'$ has the required property that
$\widehat{\calc'} \, = \, \widehat{\calc}+1 \, \mbox{mod} \, 2^n$.
Note that bit 1 of $\widehat{\calc}+1$ is the complement of bit 1 of $\widehat{\calc}$.
Also note that for each $j$, $2 \leq j \leq n$, bit $j$ of $\widehat{\calc}+1$ equals 1 iff
either $\calc[v_j] = 1$ and at least one node in $V_j$ has value 0 in $\calc$,
or $\calc[v_j] = 0$ and all the nodes in $V_{j-1}$ (i.e., all the lower order bits) 
have value 1 in $\calc$.

Note that for node $v_1$,
the only layer whose local transition function for $v_1$ is not the constant 0 is layer 1; 
thus $\calc'[v_1] = \overline{\calc[v_1]}$.

Next, consider any other node $v_j$, $2 \leq j \leq n$.
Note that the only layers whose local transition function for $v_j$ is not the constant 0 are layers 1, $j$, and $n+j-1$.

Suppose that $\calc[v_j] = 1$.
Then, $f_{1,v_j}(\calc) = 1$,
i.e., the local function for node $v_j$ in layer 1 evaluates to 1. 
Also, at least one node in $V_j$ has value 0 in $\calc$ iff 
$f_{n+j-1,v_j}(\calc) = 0$, i.e.,
the local function for node $v_j$ in layer $n+j-1$ evaluates to 0.
Thus, when $\calc[v_j] = 1$, at least one node in $V_j$ has value 0 in $\calc$ 
iff either exactly one input ($f_{1,v_j}$)
or exactly two inputs ($f_{1,v_j}$ and $f_{j,v_j}$) 
to the master function for $v_j$ have value 1.
Consequently, when $\calc[v_j] = 1$, $\calc'[v_j]$ equals bit $j$ of $\widehat{\calc}+1$.

Now, suppose that $\calc[v_j] = 0$.
Then, $f_{1,v_j}(\calc)$ and $f_{n+j-1,v_j}(\calc)$ both equal 0. 
Moreover, since $\calc[v_j] = 0$, all the nodes in $V_{j-1}$ have value 1 in $\calc$ iff 
$f_{j,v_j}(\calc) = 1$.
Thus, when  $\calc[v_j] = 0$,  
if all the nodes in $V_{j-1}$ have value 1 in $\calc$, 
then exactly one input ($f_{j,v_j}$) to the master function for $v_j$ has value 1,
and if at least one node in $V_{j-1}$ has value 0 in $\calc$, 
then no input to the master function for $v_j$ has value 1.
Consequently, when $\calc[v_j] = 0$, $\calc'[v_j]$ equals bit $j$ of $\widehat{\calc}+1$. This completes the proof.
\end{proof}

As mentioned in Section~\ref{sec:intro}, the above results
point out that even with just two layers, one can design
MSyDSs with threshold local functions whose phase spaces 
contain cycles which are much larger than the ones possible
in single-layer SyDSs with threshold local functions.

\section{Complexity of Equivalence}
\label{sec:complexity}
In this section, 
we show that the inequivalence problem is \cnp{}-Complete, 
even when there is just a minor
difference between given pair of MSyDSs \cals{} and \calsp{}.

\begin{theorem}\label{thm:inequiv_NPcomplete}
The MSyDS inequivalence problem is \cnp{}-complete,
even when $(i)$ all local functions are threshold functions, 
$(ii)$ the two multilayer networks are identical,
$(iii)$ the graph in each layer is a star graph plus a set 
(possibly empty) of isolated nodes, 
$(iv)$ all master functions are \texttt{OR},
and $(v)$ the two given MSyDSs differ only in the value of the threshold of one node in one layer.
\end{theorem}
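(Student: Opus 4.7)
Membership in \cnp{} is already discussed in Section~\ref{sse:prob_form}. For \cnp-hardness I reduce from 3SAT. Given a 3SAT instance $\phi$ over variables $x_1,\ldots,x_n$ (with $n\geq 2$; pad with two unused variables if necessary) and clauses $C_1,\ldots,C_m$, I build two MSyDSs \cals{} and \calsp{} on $2n+1$ nodes: a distinguished node $v^*$ together with a pair of ``literal nodes'' $u_i,\bar u_i$ per variable. Both systems use the same $1+n+m$ layers, with $v^*$ the center of a star in every layer (all other nodes being leaves or isolated), all local functions threshold functions, and all master functions \texttt{OR}. The two systems are identical except for the threshold of $v^*$ in one designated layer $\ell^*$.

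\textbf{Construction.} In layer $\ell^*$, $v^*$ is the star center with all $2n$ literal nodes as leaves; its threshold is $n$ in \cals{} and $n+1$ in \calsp{}. For each variable $i$, a layer $V_i$ has leaves $\{u_i,\bar u_i\}$ and $v^*$-threshold $2$. For each clause $C_j=\ell_{j,1}\vee\ell_{j,2}\vee\ell_{j,3}$, a layer $L_j$ has as leaves the three ``opposite-literal'' nodes (i.e., $\bar u_i$ if $\ell_{j,k}=x_i$, and $u_i$ if $\ell_{j,k}=\neg x_i$) and $v^*$-threshold $3$. The local functions at nodes other than $v^*$ are arbitrary fixed threshold functions and are the same in both systems.

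\textbf{Characterising witnesses.} Since only $v^*$'s $\ell^*$-threshold differs, every node other than $v^*$ has the same successor in \cals{} and \calsp{}. Combined with the \texttt{OR} master function, a configuration $\calc$ is an inequivalence witness iff (a) $v^*$'s $\ell^*$-local function differs on $\calc$ between the two systems --- equivalently, exactly $n$ nodes among $\{v^*\}\cup\{u_i,\bar u_i\}_i$ are in state $1$ --- and (b) $v^*$'s local function evaluates to $0$ in every other layer. Unpacking (b): in each $V_i$ at most one of $\{v^*,u_i,\bar u_i\}$ is $1$, and in each $L_j$ at most two of $\{v^*\}$ together with the three opposite-literal nodes for $C_j$ are $1$.

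\textbf{Correctness and main obstacle.} If $v^*[\calc]=1$, then (b) on every $V_i$ forces $u_i=\bar u_i=0$, giving an $\ell^*$-count of $1\ne n$ (using $n\geq 2$), which violates (a); hence $v^*[\calc]=0$. Then (b) on $V_i$ gives $u_i+\bar u_i\leq 1$, while (a) gives $\sum_i(u_i+\bar u_i)=n$, which collapses to $u_i+\bar u_i=1$ for every $i$. Thus $\calc$ encodes a truth assignment $\alpha$, and (b) on $L_j$ now reads ``at most two opposite-literal nodes for $C_j$ are $1$'', i.e., at least one literal of $C_j$ is true under $\alpha$. Conversely, any satisfying assignment directly produces such a $\calc$, so $\phi$ is satisfiable iff \cals{} and \calsp{} are inequivalent. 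The key design obstacle is that condition (b) yields only ``threshold-not-met'' upper bounds, whereas a truth assignment requires the equality $u_i+\bar u_i=1$; I overcome this by pairing the $V_i$-upper bound with the single exact-count constraint from (a), and by deliberately setting the $V_i$-threshold to $2$ (not $3$) so that $v^*=1$ is ruled out and no spurious witnesses can arise.
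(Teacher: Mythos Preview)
Your proof is correct and follows essentially the same approach as the paper: the same $2n+1$ nodes (a center plus two literal nodes per variable), the same $1+n+m$ star layers (one ``global'' layer, one per variable with the two literal leaves and center-threshold $2$, one per clause with the complemented-literal leaves and center-threshold $3$), \texttt{OR} masters, and the single differing threshold $n$ vs.\ $n{+}1$ at the center in the global layer. Your witness analysis via the \texttt{OR}-master collapse (forcing all non-$\ell^*$ center local functions to $0$, then squeezing the per-variable upper bounds against the global exact count) mirrors the paper's argument; the only cosmetic differences are that the paper fixes all non-center thresholds to $0$ and allows clauses with fewer than three literals, while you leave non-center thresholds arbitrary (harmless, since those nodes agree in both systems) and pad to ensure $n\ge 2$.
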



\begin{proof}
As mentioned in Section~\ref{sse:prob_form},
the MSyDS inequivalence problem in \cnp{}.
For \cnp{}-hardness, a reduction from the 3SAT problem \cite{GJ-1979} proceeds as follows.
Let $f$ be the given CNF formula. 
Let $n$ be the number of variables in $f$ and $m$ be the number of clauses.
The constructed MSyDSs $\mc{S}$ and $\mc{S}'$ have the same node set
$V$, consisting of $2n+1$ nodes, as follows.
For each variable $x_i \in X$,
$V$ contains the two nodes $y_i$ and $z_i$.
Intuitively, node $y_i$ corresponds to the literal $x_i$,
and node $z_i$ corresponds to the literal $\overline{x_i}$.
We refer to these $2n$ nodes as {\bf literal nodes}.
There is also one additional node, $w$, which we refer to as the {\bf center node}.

The constructed pair of multilayer graphs each contains node set $V$ and $n+m+1$ layers.
We refer to the set of layers as $\Gamma_1 \cup \Gamma_2 \cup \Gamma_3$.
$\Gamma_1$ consists of a single graph $G_1$.
$\Gamma_2$ consists of $n$ graphs, which we refer to as $G_2^i$, $1 \leq i \leq n$.
$\Gamma_3$ consists of $m$ graphs, which we refer to as $G_3^j$, $1 \leq j \leq m$. 

Graph $G_1$ contains $2n$ edges, 
and is a star graph with node $w$ in the center, and the other $2n$ nodes as leaves.
Each graph $G_2^i$ contains the two edges $\{w,y_i\}$ and $\{w,z_i\}$.
Each graph $G_3^j$ contains an edge for each literal in clause $c_j$.
The endpoints of a given edge are $w$ and the literal node 
for the \emph{complement} of the literal occurring in the clause.
For instance, if clause $c_j$ is $x_3 \vee x_6 \vee \overline{x_9}$,
then $G_3^j$ contains the three edges $\{w,z_3\}$, $\{w,z_6\}$ and $\{w,y_9\}$.

The threshold of every node other than node $w$ is 0 in every graph.
The threshold of node $w$ in each graph $G_2^i$ is 2.
The threshold of node $w$ in each graph $G_3^j$ is the number of literals in clause $j$.
The threshold of node $w$ in graph $G_1$ is $n$ in $\mc{S}$, and is $n+1$ in $\mc{S}'$.

This completes the construction of $\mc{S}$ and $\mc{S}'$.
Note that the only difference between $\mc{S}$ and $\mc{S}'$ is the value of the threshold of node $w$ in $G_1$.

On the one hand, suppose that $f$ is satisfiable.
Let $\alpha$ be a satisfying assignment for $f$.
Let $\calc_{\alpha}$ be the configuration of $V$ such that: 
$\calc_{\alpha}(w) =0$; and for each variable $x_i$ of $f$,
if $\alpha(x_i) =1$ then $\calc_{\alpha}(y_i) = 1$ and $\calc_{\alpha}(z_i) = 0$,
and if $\alpha(x_i) =0$ then $\calc_{\alpha}(y_i) = 0$ and $\calc_{\alpha}(z_i) = 1$.
It can be verified that in the successor configuration of $\calc_{\alpha}$,
the value of node $w$ is 1 under $\mc{S}$, and is 0 under $\mc{S}'$.
Thus, $\mc{S}$ and $\mc{S}'$ are inequivalent.

On the other hand, suppose that $\mc{S}$ and $\mc{S}'$ are inequivalent.
Then there exists a configuration $\calc$ of $V$, whose successor configuration 
under $\mc{S}$ and $\mc{S}'$ differ.

Let $\cald$ and $\cald'$ denote the successor of $\calc$ under $\mc{S}$ and $\mc{S}'$, respectively.
Since $\mc{S}$ and $\mc{S}'$ only differ in the threshold value of node $w$ 
in graph $G_1$, 
this difference in threshold is responsible for successor configurations $\cald$ and $\cald'$ being different.
Thus, the number of 1's in $\calc$ is exactly $n$, $\cald(w) = 1$ and $\cald'(w) = 0$.

Without loss of generality, we can assume that $n \geq 2$.
If $\calc(w)$ were to equal 1, 
then at least one of the graphs in $\Gamma_2$ would make $\cald'(w)$ equal 1;
thus, $\calc(w) = 0$.
Consequently, the $n$ nodes with value 1 in $\calc$ are all literal nodes.
If for some variable $x_i$ of $f$, the literal nodes $y_i$ and $z_i$ both have value 1 in $\calc$,
graph $G_2^i$ would make $\cald'(w)$ equal 1.
Thus, for each of the $n$ variables of $f$, 
exactly one of the two corresponding literal nodes has value 1 in $\calc$.

\newcommand{\subcalc}{\mathcal{C}}

Let $\alpha_{\subcalc}$ denote the assignment to the variables of $f$, 
such that for each variable $x_i$, $\alpha_{\subcalc}(x_i) = 1$ iff $\calc(y_i) = 1$.

Now consider any clause of $f$, say clause $c_j$.
If $\alpha_{\subcalc}$ does not satisfy clause $c_j$, 
then all the literal nodes that are neighbors of $w$ in graph $G_3^j$ would have value 1 in $\calc$,
thereby making $\cald'(w)$ equal 1.
Thus, assignment $\alpha_{\subcalc}$ satisfies all the clauses of $f$,
so $f$ is satisfiable.

Thus, $f$ is satisfiable iff $\mc{S}$ and $\mc{S}'$ are inequivalent. This concludes the proof.
\end{proof}

\noindent
\textbf{Remark.} The equivalence problem for single-layer
systems with threshold local functions is known to be 
efficiently solvable~\cite{Adiga-etal-wsc-2019}.
Thus, from the complexity perspective, 
the above theorem points out a significant difference between 
single-layer and multilayer systems with threshold local functions.

\section{Efficient Algorithms for Equivalence 
for Special Classes of MSyDSs}
\label{sec:special}

We first introduce several key concepts which  help us derive
efficient solutions for the equivalence problem for
several restricted classes of MSyDSs.
Let $\mc{S}$ and $\mc{S}'$ be a pair of MSyDSs with the same node
set $V$. For a given node $v \in V$, we say that $\mc{S}$ and
$\mc{S}'$ are $v$-\emph{equivalent} if for every configuration
$\calc$ of $V$, the value of $v$ in the successor of $\calc$ under
$\mc{S}$ and $\mc{S}'$ are the same.  Note that $\mc{S}$ and $\mc{S}'$
are equivalent iff for every node $v \in V$, $\mc{S}$ and $\mc{S}'$
are $v$-equivalent.  Thus, an efficient algorithm for node-equivalence
would lead to an efficient algorithm for MSyDS equivalence.

For a given threshold function, we use the term {\em negative
threshold} to mean the number of 0's that make the function equal
0.  Thus, for a given node $v$ of a MySDS, if the local function
for $v$ on a given layer is a threshold function with threshold
$t$, and the node has degree $\delta$ in that layer, the negative
threshold for that local function is $\delta - t + 2$.

\subsection{Fixed Number of Layers}

Suppose that for a node $v$ of a given pair of MSyDSs $\mc{S}_0$
and $\mc{S}_1$ with a common  node set $V$, we want to determine
the $v$-equivalence of $\mc{S}_0$ and $\mc{S}_1$.  We could, for
each of the $2^{|V|}$ configurations $\calc$ of $V$, compare the
value of $v$ in the successor configuration of $\calc$ under
$\mc{S}_0$ and under $\mc{S}_1$.  However, since the number of
configurations is exponential in the number of nodes, this
straightforward approach would take time exponential in the number
of nodes.

In contrast, we show that when there is a bound on the number of
layers of the given MSyDSs, and each local function is symmetric,
$v$-equivalence can be determined in time polynomial in the number
of nodes, where the degree of the polynomial depends on the bound
on the number of layers.  Our algorithm for doing this is based on
constructing a partition $\Pi_v$ of the node set $V$, such that $\Pi_v$
has the following two key properties.

\begin{enumerate}[leftmargin=*,noitemsep,topsep=0pt]
    \item The number of blocks in partition $\Pi_v$ is a function
    of the number of layers in the given MSyDSs, and is independent
    of the number of nodes.  
    \item For any configuration $\calc$ of $V$,
     the value of node $v$ in the successor configuration of $\calc$
     for each of the given MSyDSs is determined by how many of the
     nodes in each of the blocks of $\Pi_v$ equal 1 in $\calc$.
\end{enumerate}

For any configuration $\calc$ of $V$, let the $v$-\emph{profile} of $\calc$ be a count of how many of the nodes in each of the blocks of $\Pi_v$ are equal to 1 in $\calc$.
Since the number of blocks  of $\Pi_v$ is polynomial, the number of possible profiles is polynomial, and so the set of all possible $v$-profiles can be explored in polynomial time.
The approach of exploring all $v$-profiles yields a polynomial time algorithm, as described by the following result.

\begin{theorem}\label{thm:fixed_number_layers}
If the number of layers is bounded by a fixed value for both dynamical
systems, the MSyDS equivalence problem can be solved in polynomial
time when all local functions are symmetric functions, even when
each master function is arbitrary.  
\end{theorem}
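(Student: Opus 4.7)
The plan is to instantiate the ``partition-and-profile'' framework sketched in the lead-in to this theorem. For a fixed node $v \in V$, I would first define, for each $u \in V$, its \emph{signature} with respect to $v$: the $2k$-bit vector indexed by pairs $(j,i) \in \{0,1\} \times \{1,\ldots,k\}$ whose $(j,i)$ entry equals $1$ iff $u$ lies in the closed neighborhood of $v$ in layer $i$ of $\mc{S}_j$. Let $\Pi_v$ be the partition of $V$ obtained by grouping together nodes with identical signatures. Since $k$ is fixed, the number of blocks in $\Pi_v$ is at most $2^{2k}=4^k$, a constant independent of $n$. Note that $v$ itself has the all-ones signature, since it belongs to each of its own closed neighborhoods in both systems.

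Next I would verify the two properties that make the profile sufficient. The block-count property is immediate from the construction. For the property that the profile determines the successor value of $v$, observe that for any configuration $\calc$ and any layer $i$ of $\mc{S}_j$, the number of $1$'s in the closed neighborhood of $v$ in that layer equals the sum, over all blocks $B \in \Pi_v$ whose signature has a $1$ in coordinate $(j,i)$, of the number of $1$'s that $\calc$ assigns to $B$. Hence all $2k$ closed-neighborhood counts are determined by the $v$-profile of $\calc$. Because every local function is symmetric, its value depends only on the count in its closed neighborhood, so each of the $k$ local outputs feeding the master function of $v$ in $\mc{S}_j$ is determined by the profile, and hence so is the next state of $v$ under $\mc{S}_j$, no matter how complex the (arbitrary) master function is.

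From here the algorithm is immediate. For each $v \in V$, enumerate every $v$-profile, i.e., every assignment of an integer in $\{0,1,\ldots,|B|\}$ to each block $B \in \Pi_v$; for each profile, compute the $2k$ closed-neighborhood counts in constant time and compare the resulting successor values of $v$ under $\mc{S}_0$ and $\mc{S}_1$. The systems are $v$-equivalent iff no profile witnesses a disagreement, and equivalent iff $v$-equivalence holds for every $v$. The number of profiles is bounded by $\prod_{B \in \Pi_v}(|B|+1) \leq (n+1)^{4^k}$, which is polynomial in $n$ since $k$ is fixed, so the total cost summed over all $n$ choices of $v$ remains polynomial. I do not anticipate a genuine obstacle: the content of the argument lies entirely in choosing the signature finely enough that symmetry of the local functions collapses the effective state space for each $v$-equivalence test to polynomial size, while keeping the number of signatures a function of $k$ only.
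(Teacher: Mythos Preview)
Your proposal is correct and follows essentially the same approach as the paper: your ``signature'' is precisely the paper's function $\xi_v$ mapping each node to the set of anchored layers in which it is a generalized neighbor of $v$, your partition $\Pi_v$ coincides with the paper's, and the enumeration of $v$-profiles with the resulting polynomial bound $(n+1)^{4^k}$ matches the paper's argument. The only cosmetic difference is that the paper explicitly allows the two systems to have different layer counts $k_0$ and $k_1$ (using $k' = k_0 + k_1$ anchored layers), whereas you tacitly take both to have $k$ layers; this does not affect the argument.
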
 

\begin{proof}
Let $K$ be the fixed bound on the number of layers.  Let $\mc{S}_0$
and $\mc{S}_1$ denote the given pair of MSyDSs, with common node
set $V$.  Let $k_0$ and $k_1$ be the number of layers in $\mc{S}_0$
and $\mc{S}_1$, respectively.  Let $k' = k_0 + k_1$.  We represent
a layer of $\mc{S}_0$ or $\mc{S}_1$ as a pair $(a,\ell)$, where $a
\in \calb = \{0,1\}$ specifies one of the two given MSyDSs, and
$\ell$ is a layer number of $\mc{S}_a$, i.e., $1 \leq \ell \leq
k_a$.  We refer to such a pair as an \emph{anchored-layer}.

Let $L$ denote the set of $k'$ anchored layers.  Let $\mc{L}$ denote
the power set of $L$, so that $|\mc{L}| = 2^{k'}$.  We define a
\emph{profile} to be a vector with an element for each member of
$\mc{L}$, where each element value is an integer in the range 0
thru $|V|$.  Thus, a profile has an element for each subset of $L$.

Now, consider a given node $v \in V$.  Let $\xi_v: V \to \mc{L}$
denote the function that maps each node $u$ into the set of anchored
layers $(a,\ell)$ such that $u$ and $v$ are 
generalized neighbors\footnote{We use the term ``generalized
neighbors'' to mean that $u$ and $v$ are in each other's
closed neighborhood.}
in layer $\ell$ of $\mc{S}_a$.  
Note that $\xi_v$ induces a partition
of $V$, with each block of the partition consisting of those nodes
that $\xi_v$ maps into the exact same set of anchored layers.  We
refer to this partition as $\Pi_v$.  Thus, a pair of nodes $u$ and
$w$ are in the same block of $\Pi_v$ iff $\xi_v(u) = \xi_v(w)$,
i.e., for each layer of $\mc{S}_0$ and each layer of $\mc{S}_1$,
$u$ and $w$ are either both generalized neighbors of $v$ in that
layer or neither is a generalized neighbor of $v$.

Based on partition $\Pi_v$, for each set $\lambda \in \mc{L}$, we
let $V_{v,\lambda}$ denote the set of nodes in the block of the
partition associated with $\lambda$, i.e., the set of nodes $u$
such that $\xi_v(u) = \lambda$.  Thus, $V_{v,\lambda}$ is the set
of nodes $u$ such that $v$ and $u$ are generalized neighbors in
every anchored layer in $\lambda$, and in no other anchored layer.

We say that a given profile $\theta$ is a  $v$-\emph{profile} if
for every $\lambda \in \mc{L}$, $\theta[\lambda] \leq |V_{v,\lambda}|$.
Let $\Theta_v$ denote the set of all $v$-profiles.

We now specify how a given $v$-profile $\theta$ specifies a value
for node $v$ for each of the two given MSyDSs $\mc{S}_0$ and
$\mc{S}_1$.  Let function $\rho_v: \Theta_v \times \calb \to \calb$
be defined as follows: Consider a given $(\theta,a) \in \Theta_v
\times \calb$.  For each anchored layer $(a, \ell)$ of $\mc{S}_a$,
let $$w_v((a,\ell)) = \sum\limits_{\lambda \in \mc{L} \, | \,
(a,\ell) \in \lambda} \theta[\lambda]$$ and let $W_v((a,\ell))$ be
the value of the symmetric local function $f_{\ell,v}$ of $\mc{S}_a$
when exactly $w_v((a,\ell))$ of its inputs equal 1.  This defines
a $k_a$-vector $\mathbf{W}_{v,a}$ of Boolean values, where
$\mathbf{W}_v(\ell) = w_v((a,\ell))$.  Then, $\rho_v(\theta,a)$ is
defined to be the value of the master function $\psi_{v,a}$ of
$\mc{S}_a$ when $\mathbf{W}_{v,a}$ is the input to $\psi_{v,a}$.

Let $\mu_v$ denote the function that maps each configuration $\calc$
over $V$ into a $v$-profile, as follows.  For each configuration
$\calc$ over $V$, $\mu_v(\calc)$ is the $v$-profile such that for
each $\lambda \in \mc{L}$, $\mu_v(\lambda)$ is the number of nodes
in $V_{v,\lambda}$ that have value 1 in $\calc$.  Note that $\mu$
is an onto function, i.e., for every $v$-profile $\theta$, there
is at least one configuration $\calc$ of $V$ such that $\mu_v(\calc)
= \theta$.  Such a configuration can be constructed by setting, for
each $\lambda \subseteq L$, exactly $\theta(\lambda)$ members of
$V_{v,\lambda}$ to value 1, and the other members of $V_{v,\lambda}$
to value 0.

Note that for each MSyDS $\mc{S}_a$ and any configuration $\calc$
of $V$, the value of node $v$ in the successor configuration of
$\calc$ under $\mc{S}_a$ equals the value of $\rho_v(\mu_v(\calc),a)$.
Thus, we can determine the $v$-equivalence of $\mc{S}_0$ and
$\mc{S}_1$ by comparing the values of $\rho(\theta,0)$ and
$\rho(\theta,1)$ for every $v$-profile $\theta$.

While the number of configurations is exponential in the number of
nodes, the number of elements in any profile is bounded by a function
of $K$, and the value of each component of any $v$-profile is at
most the number of nodes.  Since $K$ is fixed, the number of profiles
is polynomial in the number of nodes.  Thus, this approach yields
a polynomial time algorithm, as follows.

\smallskip

\noindent
\textbf{The algorithm.} We now describe a polynomial time algorithm
for equivalence, utilizing the above concepts.  Given $\mc{S}_0$
and $\mc{S}_1$, the algorithm first constructs $\call$.  Then, for
each node $v$, the algorithm determines whether $\mc{S}_0$ and
$\mc{S}_1$ are $v$-equivalent, as follows.  The algorithm constructs
partition $\Pi_v$, and then constructs $\Theta_v$, the set of all
$v$-profiles.  Next, for each $\theta \in \Theta_v$, the algorithm
computes $\rho(\theta,0)$ and $\rho(\theta,1)$.  Note that $\mc{S}_0$
and $\mc{S}_1$ are $v$-equivalent iff for every $\theta \in \Theta_v$,
$\rho(\theta,0) = \rho(\theta,1)$.

Finally, $\mc{S}_0$ and $\mc{S}_1$ are equivalent iff for every
node $v$ and every $\theta \in \Theta_v$, $\rho(\theta,0) =
\rho(\theta,1)$.  
\end{proof}


\subsection{Bounded Threshold}

We need a couple of lemmas to get an efficient algorithm
for this special case.

\begin{lemma}\label{lem:max_threshold}
Let $\mc{S}_0$ and $\mc{S}_1$ be a pair of MSyDSs with common node
set $V$, such that every local function is a threshold function,
and all master functions are \texttt{OR}.  Let $\tau$ denote the
largest threshold value occurring in $\mc{S}_0$ and $\mc{S}_1$.
Then $\mc{S}_0$ and $\mc{S}_1$ are inequivalent iff there is an
inequivalence witness with at most $\tau$ 1's.
\end{lemma}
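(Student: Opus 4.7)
The plan is to address the two directions separately. The reverse direction is immediate from the definition: any inequivalence witness, no matter how many $1$'s it contains, certifies that the phase spaces of $\mc{S}_0$ and $\mc{S}_1$ differ. For the forward direction, I would start with an arbitrary inequivalence witness $\calc$ and whittle it down to a witness $\calcp$ with at most $\tau$ ones.

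The key structural observation is a monotonicity property special to this setting: because every local function is a threshold and every master function is \texttt{OR}, a node $v$'s successor value equals $1$ if and only if at least one layer's threshold at $v$ is met by the $1$'s in $v$'s closed neighborhood in that layer. Consequently, flipping $1$'s to $0$'s in the current configuration can only change successor values from $1$ to $0$, never the other way. With this in hand, pick a node $v$ at which the successors of $\calc$ under $\mc{S}_0$ and $\mc{S}_1$ differ; without loss of generality the $\mc{S}_0$-successor is $1$ and the $\mc{S}_1$-successor is $0$ (otherwise swap the roles of the two systems). Since $v$'s $\mc{S}_0$-successor is $1$, some layer $i$ of $\mc{S}_0$ has its threshold $t_{i,v}$ at $v$ met, and by hypothesis $t_{i,v} \leq \tau$; so $v$'s closed neighborhood in that layer of $\mc{S}_0$ contains at least $t_{i,v}$ nodes that are $1$ in $\calc$. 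Choose any $t_{i,v}$-element subset $S$ of such nodes and let $\calcp$ be the configuration in which precisely the nodes of $S$ are $1$.

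Verification is then short. Under $\mc{S}_0$, the layer-$i$ threshold at $v$ is still met in $\calcp$, because $S$ lies inside the relevant closed neighborhood and $|S| = t_{i,v}$, so $v$'s $\mc{S}_0$-successor remains $1$. Under $\mc{S}_1$, $\calcp$ is pointwise dominated by $\calc$, so the monotonicity observation implies that no threshold at $v$ that failed in $\calc$ can succeed in $\calcp$; hence $v$'s $\mc{S}_1$-successor remains $0$. Thus $\calcp$ is an inequivalence witness with at most $\tau$ ones. I do not expect any serious obstacle here; the only step that merits care is the monotonicity check against $\mc{S}_1$, since ``closed neighborhood'' is layer-dependent and must be tracked layer by layer, but the monotonicity principle applies uniformly across layers and handles this cleanly.
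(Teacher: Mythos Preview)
Your proposal is correct and mirrors the paper's proof essentially step for step: start from an arbitrary inequivalence witness, pick a node $v$ where the successors differ, choose a layer on the ``$1$'' side whose threshold is met, restrict to exactly threshold-many $1$'s from $v$'s closed neighborhood there, and use monotonicity of threshold-plus-\texttt{OR} to keep the ``$0$'' side at $0$. The only cosmetic difference is that you label the system with successor $1$ as $\mc{S}_0$ while the paper labels it $\mc{S}_1$, which is harmless by symmetry.
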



\begin{proof}
Trivially, if there is an inequivalence witness with at most $\tau$ 1's, 
then $\mc{S}_0$ and $\mc{S}_1$ are inequivalent.

Now, suppose that $\mc{S}_0$ and $\mc{S}_1$ are inequivalent.
Let $\calc$ be an inequivalence witness.
Let $\cald_0$ and $\cald_1$ denote the successor configuration of $\calc$ 
under $\mc{S}_0$ and $\mc{S}_1$, respectively.
Let $v$ be a node such that $\cald_0[v] \neq \cald_1[v]$.
Wlog, assume that $\cald_0[v] = 0$ and $\cald_1[v] = 1$.
Then, there is at least one layer of $\mc{S}_1$ such that for $\calc$, 
the local function for $v$ in $\mc{S}_1$ evaluates to 1.
Let $\ell$ be any such layer. 
Let $t_v$ denote the threshold of $v$ in layer $\ell$ of $\mc{S}_1$.
Let $V'$ be any set of \emph{exactly} $t_v$ nodes that are all 1 in $\calc$ 
and are generalized neighbors of $v$ in layer $\ell$ of $\mc{S}_1$.

Let $\calc'$ denote the configuration in which the nodes in $V'$
have value $1$, and all other nodes have value $0$.  Let $\cald'_0$
and $\cald'_1$ denote the successor configuration of $\calc'$ under
$\mc{S}_0$ and $\mc{S}_1$, respectively.  Then, for $\calc'$, the
local function for $v$ in layer $\ell$ of $\mc{S}_1$ evaluates to
$1$, so $\cald_1'[v] = 1$.  Since every node with value $0$ in
$\calc$ still has value $0$ in $\calc'$, we have $\cald_0'[v] = 0$.
Thus, $\calc'$ is an inequivalence witness with at most $\tau$ 1's.
This concludes the proof.  
\end{proof}

We now note that a dual lemma holds, with regard to maximum negative
threshold value. The proof is the dual to the proof of
Lemma~\ref{lem:max_threshold}.


\begin{lemma}\label{lem:max_negative_threshold}
Let $\mc{S}_0$ and $\mc{S}_1$ be a pair of MSyDSs with common node
set $V$, such that every local function is a threshold function,
and all master functions are \texttt{AND}.  Let $\nu$ denote the
largest negative threshold value occurring in $\mc{S}_0$ and
$\mc{S}_1$.  Then $\mc{S}_0$ and $\mc{S}_1$ are inequivalent iff
there is an inequivalence witness with at most $\nu$ 0's.
\end{lemma}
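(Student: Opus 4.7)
The plan is to mirror the proof of Lemma~\ref{lem:max_threshold}, dualizing every ``1'' to ``0'' and exchanging the roles of \tor{} and \tand{}. First, I would dispatch the trivial direction: if an inequivalence witness with at most $\nu$ 0's exists, then by definition the two systems are inequivalent, so nothing to show.

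For the non-trivial direction, I would start from an arbitrary inequivalence witness $\calc$ and produce a witness with at most $\nu$ 0's. Let $\cald_0$ and $\cald_1$ be the successors of $\calc$ under $\mc{S}_0$ and $\mc{S}_1$, and pick a node $v$ at which they disagree; WLOG assume $\cald_0[v] = 1$ and $\cald_1[v] = 0$. Since every master function is \tand{}, $\cald_1[v] = 0$ forces at least one layer $\ell$ of $\mc{S}_1$ in which the local function for $v$ evaluates to $0$ on $\calc$. Writing $t_v$ for the threshold of $v$ in that layer and $\delta$ for its degree, this means the closed neighborhood of $v$ in layer $\ell$ contains at least $\delta - t_v + 2$ nodes that equal $0$ in $\calc$, i.e.\ at least the negative threshold $\nu_v \leq \nu$ many 0's.

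Next, I would pick a set $V'$ of \emph{exactly} $\nu_v$ generalized neighbors of $v$ in layer $\ell$ of $\mc{S}_1$, each of which has value $0$ in $\calc$, and define a new configuration $\calc'$ in which every node in $V'$ has value $0$ and every other node has value $1$. Let $\cald_0'$ and $\cald_1'$ be its successors. By construction, layer $\ell$ of $\mc{S}_1$ still sees at least $\nu_v$ 0's in the closed neighborhood of $v$, so its local function still evaluates to $0$; with \tand{} as the master, this yields $\cald_1'[v] = 0$. For $\mc{S}_0$, observe that in passing from $\calc$ to $\calc'$ we only flip 0's to 1's (never the reverse), which can only increase the number of 1's inside any closed neighborhood; hence every threshold local function of $v$ that evaluated to $1$ on $\calc$ still evaluates to $1$ on $\calc'$. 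Since $\cald_0[v] = 1$ and the master is \tand{}, \emph{all} local functions of $v$ in $\mc{S}_0$ were $1$ on $\calc$, so $\cald_0'[v] = 1$ as well. Thus $\calc'$ is an inequivalence witness with exactly $\nu_v \leq \nu$ 0's.

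The main obstacle is keeping the direction of threshold monotonicity straight. In the primal lemma, shrinking the support (removing 1's) preserves a threshold function's value of $0$; here we instead need the dual fact that \emph{expanding} the support (removing 0's) preserves a threshold function's value of $1$. Once this dual monotonicity is pinned down and we are careful that the single layer exhibiting local value $0$ on $\mc{S}_1$ is preserved when we delete 0's outside $V'$, the rest of the argument is a straightforward dualization of Lemma~\ref{lem:max_threshold}'s construction.
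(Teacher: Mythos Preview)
Your proposal is correct and is precisely the dualization that the paper invokes: it states only that ``the proof is the dual to the proof of Lemma~\ref{lem:max_threshold},'' and your argument carries out exactly that exchange of $0$'s with $1$'s, \tor{} with \tand{}, and thresholds with negative thresholds. The monotonicity check you flag (flipping $0$'s to $1$'s preserves each threshold function's value of $1$, so the \tand{} master still yields $\cald_0'[v]=1$) is the only point requiring care, and you have it right.
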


We are now ready to prove the main result of this subsection.

\begin{theorem}\label{thm:bounded_threshold}
(a) If all local functions are threshold functions, the maximum
threshold value is bounded by a fixed value, and all master functions
are \texttt{OR}, the MSyDS equivalence problem can be solved in
polynomial time.

(b) If all local functions are threshold functions, the maximum
negative threshold value is bounded by a fixed value, and all master
functions are \texttt{AND}, the MSyDS equivalence problem can be
solved in polynomial time.  
\end{theorem}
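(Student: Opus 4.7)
The plan is to exploit Lemmas~\ref{lem:max_threshold} and~\ref{lem:max_negative_threshold}, which reduce the search for an inequivalence witness from the full configuration space to a much smaller subset. For part~(a), once the maximum threshold $\tau$ is fixed, Lemma~\ref{lem:max_threshold} guarantees that $\mc{S}_0$ and $\mc{S}_1$ are inequivalent iff there is an inequivalence witness $\calc$ with at most $\tau$ ones. So the algorithm simply enumerates every configuration of $V$ that has at most $\tau$ ones, computes the successor of each such configuration under both $\mc{S}_0$ and $\mc{S}_1$, and declares inequivalence iff the two successors disagree on some enumerated configuration.

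The running time analysis is routine: the number of configurations with at most $\tau$ ones is $\sum_{i=0}^{\tau} \binom{n}{i} = O(n^{\tau})$, which is polynomial in $n$ because $\tau$ is a fixed constant. For each such configuration, computing the successor under each MSyDS takes time polynomial in the size of the input (summing over all layers, each node evaluates its threshold function and then its \texttt{OR} master function). Multiplying these two polynomial factors gives an overall polynomial-time algorithm, establishing part~(a).

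Part~(b) is handled by the dual argument: invoke Lemma~\ref{lem:max_negative_threshold} instead of Lemma~\ref{lem:max_threshold}, and enumerate all configurations with at most $\nu$ zeros (equivalently, at least $n - \nu$ ones). The same counting bound $O(n^{\nu})$ applies, and the rest of the argument is identical with \texttt{AND} master functions in place of \texttt{OR}.

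There is no real obstacle here; the two lemmas do essentially all of the technical work by bounding the Hamming weight (or co-weight) of a minimal inequivalence witness, and the theorem follows by brute-force enumeration over the now-polynomial-sized certificate space. The only point requiring a brief comment is that when checking inequivalence, the algorithm need not compute the full successor configurations under $\mc{S}_0$ and $\mc{S}_1$; it suffices, for each candidate $\calc$, to check whether there is any node $v$ whose successor value differs between the two systems, which is still done in polynomial time per candidate.
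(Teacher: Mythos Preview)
Your proposal is correct and follows essentially the same approach as the paper's proof: invoke Lemma~\ref{lem:max_threshold} (respectively, Lemma~\ref{lem:max_negative_threshold}) to bound the Hamming weight (respectively, co-weight) of a witness, then brute-force enumerate the $O(n^{\tau})$ (respectively, $O(n^{\nu})$) candidate configurations and compare successors. You provide a bit more detail on the running-time analysis than the paper does, but the argument is the same.
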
 

\begin{proof}
(a) Let $\tau_{max}$ denote the fixed bound on the maximum threshold value.
From Lemma \ref{lem:max_threshold}, the two given MSyDSs are inequivalent
iff there is an inequivalence witness with at most $\tau_{max}$ 1's.
In polynomial time, all configurations with at most $\tau_{max}$ 1's can be generated,
and the successor configurations under the two given MSyDSs can be found.

(b) Let $\nu_{max}$ denote the fixed bound on the maximum negative threshold value.
From Lemma \ref{lem:max_negative_threshold}, the two given MSyDSs are inequivalent
iff there is an inequivalence witness with at most $\nu_{max}$ 0's.
In polynomial time, all configurations with at most $\nu_{max}$ 0's can be generated,
and the successor configurations under the two given MSyDSs can be found. 
\end{proof}

The corollary below follows immediately.

\begin{corollary}\label{thm:bounded_node_degree}
If the maximum node degree of any graph in any layer is bounded by a fixed value,
the MSyDS equivalence problem can be solved in polynomial time
when all local functions are threshold functions,
and either every master function is \texttt{OR} or every master function is \texttt{AND}.
\end{corollary}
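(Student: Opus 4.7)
The plan is to derive this corollary almost immediately from Theorem~\ref{thm:bounded_threshold}, by observing that a bound on maximum node degree forces a bound on the relevant threshold parameters. Let $\Delta$ be the fixed bound on the maximum node degree in any layer. Then in every layer, the closed neighborhood of any node $v$ has size at most $\Delta + 1$.

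Next I would argue about normalized threshold values. Any threshold function $f_{i,v}$ for node $v$ in layer $i$ takes inputs from the closed neighborhood of $v$ in that layer, which has size at most $\Delta + 1$. Thresholds larger than $\Delta + 2$ yield the constant-$0$ function (and can be replaced by the canonical value $\Delta + 2$ without changing the function), so without loss of generality every threshold value occurring in $\mc{S}_0$ or $\mc{S}_1$ lies in the range $[0, \Delta + 2]$. Hence the maximum threshold value $\tau$ from Lemma~\ref{lem:max_threshold} and the maximum negative threshold value $\nu$ from Lemma~\ref{lem:max_negative_threshold} are each bounded by the fixed constant $\Delta + 2$.

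Having reduced to bounded threshold or bounded negative threshold, I would apply Theorem~\ref{thm:bounded_threshold} directly: if every master function is \texttt{OR}, then part~(a) of that theorem, with $\tau_{max} = \Delta + 2$, gives a polynomial time equivalence algorithm; if every master function is \texttt{AND}, then part~(b), with $\nu_{max} = \Delta + 2$, gives a polynomial time equivalence algorithm. In either case the algorithm enumerates the polynomially many configurations of Hamming weight at most $\Delta + 2$ (respectively, with at most $\Delta + 2$ zeros) and checks whether any of them is an inequivalence witness by computing the successor configurations under $\mc{S}_0$ and $\mc{S}_1$.

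There is essentially no obstacle here beyond observing the bound on thresholds; the only subtlety to record is the normalization step that replaces unreasonably large thresholds by the constant-$0$ threshold $\Delta + 2$, so that the ``maximum threshold'' occurring in the system is actually bounded by a constant rather than merely being an unboundedly-large-but-equivalent-to-a-small-one value.
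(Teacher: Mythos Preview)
Your proposal is correct and follows exactly the route the paper intends: the paper states only that the corollary ``follows immediately'' from Theorem~\ref{thm:bounded_threshold}, and your argument spells out precisely that immediacy, namely that a degree bound $\Delta$ forces (after the harmless normalization you describe) both the maximum threshold and the maximum negative threshold to be at most $\Delta+2$, so parts~(a) and~(b) of Theorem~\ref{thm:bounded_threshold} apply directly. The normalization observation is the only detail worth recording, and you have it right.
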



\section{Expressive Power of MSyDSs}
\label{sec:expressive}

In this section, we consider the expressive power of some classes
of MSyDSs, based on the number of network layers.

\begin{proposition}\label{pro:single_layer}
Every MSyDS is equivalent to a single-layer SyDS.
\end{proposition}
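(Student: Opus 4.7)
The plan is to show that an arbitrary $k$-layer MSyDS $\mc{S}$ can be simulated by a single-layer SyDS $\mc{S}^*$ that has the identical state-transition map, and hence the identical phase space. The construction simply collapses all $k$ layers into one by taking the union of the edge sets and absorbing the per-layer local-function computation plus the master function into a single, more complex local function at each node.

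First I would let $\mc{S}$ have multilayer graph $\{G_i(V,E_i) : 1 \leq i \leq k\}$, local functions $\{f_{i,v}\}$, and master functions $\{\psi_v\}$. I would define the single-layer SyDS $\mc{S}^*$ on node set $V$ with graph $G^* = (V, E^*)$ where $E^* = \bigcup_{i=1}^{k} E_i$. Observe that for every node $v$, the closed neighborhood of $v$ in $G^*$ is the union, over $i$, of the closed neighborhoods of $v$ in $G_i$. Consequently, knowing the state vector restricted to $v$'s closed neighborhood in $G^*$ determines the state vector restricted to $v$'s closed neighborhood in each $G_i$.

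Next I would define, for each node $v \in V$, the single-layer local function $g_v$ by the following rule: given as input the states of $v$ and of its neighbors in $G^*$, first extract, for each layer $i \in \{1,\ldots,k\}$, the substates that correspond to $v$'s closed neighborhood in $G_i$, evaluate $f_{i,v}$ on those substates to obtain $W_v(i) \in \bbb$, and then output $\psi_v(W_v(1), W_v(2), \ldots, W_v(k))$. This $g_v$ is well-defined as a Boolean function of the states in $v$'s closed neighborhood in $G^*$, because the closed neighborhood of $v$ in each $G_i$ is a subset of its closed neighborhood in $G^*$.

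Finally, I would verify equivalence directly from the definitions: for any configuration $\calc$ of $V$, the successor state of $v$ under $\mc{S}$ is, by the MSyDS update rule, precisely $\psi_v(W_v(1), \ldots, W_v(k))$, and this is exactly what $g_v$ returns on input $\calc$. Since this holds for every node $v$ and every configuration $\calc$, the successor map of $\mc{S}^*$ agrees with that of $\mc{S}$ at every configuration, so their phase spaces are identical directed graphs. There is no real obstacle here; the only caveat to flag is that $g_v$ may be substantially more complex than any single $f_{i,v}$, which is exactly why this collapsing trick does not contradict the later results showing that, under restrictions on the allowed class of local functions, the number of layers cannot always be reduced.
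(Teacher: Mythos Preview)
Your proof is correct and follows essentially the same approach as the paper: collapse the layers into one by defining each node's single-layer local function as the composition of the master function $\psi_v$ with the per-layer local functions $f_{i,v}$. The only difference is cosmetic---the paper takes the underlying single-layer graph to be the complete graph on $V$, whereas you more economically take $E^* = \bigcup_i E_i$; either choice works, and your observation that each layer's closed neighborhood of $v$ is contained in the closed neighborhood in $G^*$ is exactly what justifies the more parsimonious version.
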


\begin{proof}
Given MSyDS $\mc{S}$, there is an equivalent SyDS $\mc{S'}$ whose
underlying graph is a complete graph.  The local function of a given
node $v$ in  $\mc{S}'$ is the master function of $\mc{S}$ for $v$,
applied to the set of local functions for $v$ on the layers of
$\mc{S}$.  
\end{proof}

\smallskip

\noindent
\textbf{Example:}~ We use the 2-layer system
in Figure~\ref{fig:msyds_example} to illustrate
the construction outlined in the proof of
Proposition~\ref{pro:single_layer}.
The underlying graph of an equivalent single-layer system is a complete graph on 5 nodes.
Let $\text{Thr}(V', q)$, where $V'$ is a subset of
nodes and $q$ is a non-negative integer, denote the
Boolean function which has the value 1 when at least
$q$ of the nodes in $V'$ have the value 1.
Using this notation, the local function associated with
node $v_1$ in the single-layer system is:
\begin{center}
$\text{Thr}(\{v_1, v_3, v_5\}, 2)$ ~\tor{}~
$\text{Thr}(\{v_1, v_2, v_3, v_4\}, 2)$
\end{center}
The local functions for the other nodes of
the system in Figure~\ref{fig:msyds_example} can be
constructed in a similar manner.
\hfill$\Box$

The next result shows that there is a hierarchy of expressive power,
based on the number of layers.

\begin{theorem}\label{thm:layer_hierarchy}
For every $k \geq 2$, there is a MSyDS $\mc{S}_k$ that contains $k$
layers, for which every local function is a threshold function, and
every master function is \texttt{OR}, with the following properties:
For each $\mc{S}_k$, there is no equivalent MSyDS with fewer layers,
for which every local function is symmetric, and every master
function is \texttt{OR}.

However there is an equivalent MSyDS with two layers, for which
every local function is a threshold function, and every master
function is \texttt{AND}.  
\end{theorem}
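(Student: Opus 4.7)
The plan is to construct an explicit family $\mc{S}_k$ on $k+1$ nodes, prove the no-collapse claim by analyzing the successor function at a single bottleneck node $v$, and then exhibit a concrete equivalent two-layer system. I would take $V = \{v, u_1, \ldots, u_k\}$. In layer $i$ of $\mc{S}_k$ the only edge is $\{v, u_i\}$; let $v$'s local function in layer $i$ be threshold-$2$ on its closed neighborhood $\{v, u_i\}$ (so it computes $v \wedge u_i$); let $u_i$'s local function in layer $i$ also be threshold-$2$ on $\{v, u_i\}$; and for each $j \neq i$, let $u_j$'s local function in layer $i$ be threshold-$1$ on the singleton $\{u_j\}$ (the identity). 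All master functions are \tor{}. A direct check then shows that the successor of any configuration $\calc$ assigns to $v$ the value $F(\calc) := \calc[v] \wedge (\calc[u_1] \vee \cdots \vee \calc[u_k])$, and since $k \geq 2$ each $u_i$'s successor value is just $\calc[u_i]$ because the master \tor{} always has at least one input equal to $\calc[u_i]$.

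For the lower bound, suppose for contradiction that some MSyDS $\mc{T}$ with $k' < k$ layers, symmetric local functions, and \tor{} masters is equivalent to $\mc{S}_k$. Let $S_j^v$ be $v$'s closed neighborhood in layer $j$ of $\mc{T}$ (so $v \in S_j^v$), let $g_j$ be the symmetric local function, and set $m_j := |S_j^v \setminus \{v\}|$. Since $F$ vanishes on every $\calc$ with $\calc[v] = 0$, and such $\calc$'s realize every count in $\{0, 1, \ldots, m_j\}$, the function $g_j$ must equal $0$ on every count in this range. The only remaining count is $m_j + 1$, so each layer is either inert ($g_j \equiv 0$) or has $m_j \geq 1$ with $g_j$ equal to the AND over $S_j^v$; the latter layers contribute conjunctions of the form $v \wedge \bigwedge_{u \in S_j^v \setminus \{v\}} u$ to the overall OR. Finally, evaluating the OR on each minimal satisfying configuration $\calc[v] = \calc[u_i] = 1$ (all other states $0$), for $i = 1, \ldots, k$, forces some layer to fire, which in turn forces $S_j^v \setminus \{v\} = \{u_i\}$. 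Distinct values of $i$ require distinct neighborhoods, yielding at least $k$ useful layers and contradicting $k' < k$.

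For the equivalent two-layer construction with threshold locals and \tand{} masters, I would define $\mc{S}'_k$ on the same node set as follows. Layer 1 is edgeless, and each node's local function is threshold-$1$ on its singleton closed neighborhood (the identity). Layer 2 is the star centered at $v$; in layer 2, $v$'s local function is threshold-$2$ on $\{v, u_1, \ldots, u_k\}$, and each $u_i$'s local function is threshold-$0$ on $\{u_i, v\}$ (the constant $1$). Every master function is \tand{}. Verification is immediate: at $v$ the successor is $\calc[v] \wedge [\,|\text{ones}(\calc) \cap \{v, u_1, \ldots, u_k\}| \geq 2\,]$, which simplifies to $F(\calc)$; at each $u_i$ the successor is $\calc[u_i] \wedge 1 = \calc[u_i]$. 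Thus $\mc{S}'_k$ and $\mc{S}_k$ induce identical successors on every configuration, so they have the same phase space.

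The main obstacle I anticipate is the case analysis in the lower bound, since $g_j$ ranges over all symmetric Boolean functions (not only thresholds) and $\mc{T}$ may choose its closed neighborhoods independently of those in $\mc{S}_k$. The key simplification is the rigid prime-implicant structure of $F$: its only prime implicants are the $k$ conjunctions $v \wedge u_i$, so once each useful layer is forced to realize a single conjunction, the $k$ minimal satisfying assignments force $k$ distinct layers by pigeonhole.
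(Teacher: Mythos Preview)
Your argument is correct and follows essentially the same route as the paper: the same $(k{+}1)$-node construction with one star edge $\{v,u_i\}$ per layer and threshold~$2$ at the center, the same lower-bound strategy of analyzing only the center node's successor and applying pigeonhole over the $k$ minimal satisfying inputs, and the same two-layer \texttt{AND} idea (edgeless identity layer plus a star layer with threshold~$2$ at the center). The only cosmetic difference is at the leaves: the paper sends every $b_i$ to~$0$, whereas you make each $u_i$ a fixed point; your \texttt{AND} construction is adapted accordingly (constant~$1$ instead of constant~$0$ for the leaves in layer~2), and both versions work.

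One small point to tighten: your dichotomy ``inert or $m_j \geq 1$ with $g_j$ equal to the AND'' does not follow solely from the configurations with $\calc[v]=0$. Those configurations only force $g_j$ to vanish on counts $0,\ldots,m_j$; when $m_j=0$ the value $g_j(1)$ is still unconstrained. You need one more line: on the configuration with $\calc[v]=1$ and all $u_i=0$ we have $F=0$, so any layer with $S_j^v=\{v\}$ and $g_j(1)=1$ would already violate equivalence. With that observation added, the argument is complete.
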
 

\begin{proof}
Let $\mc{S}_k$ be the following $k$ layer MSyDS.  Node set $V$
contains the $k+1$ nodes: $\{ a, b_1, b_2, \, \ldots \, , b_k \}$.
Layer $i$, $1 \leq i \leq k$, contains only one edge: $\{a,b_i\}$.
In every layer, the local function for node $a$ is a threshold
function with threshold 2, and every other local function is the
constant function 0, i.e., a threshold function with threshold value
equal to two plus the node degree in that layer.  Every master
function is \texttt{OR}.

Suppose there exists a MSyDS $\mc{S}'$ with fewer than $k$ layers
that is equivalent to $\mc{S}_k$, such that every local function
of $\mc{S}'$ is symmetric, and every master function is \texttt{OR}.

For convenience, we describe a configuration of $V$ by explicitly
listing the nodes that have value 1 in the configuration.  We use
$\phi$ to denote the configuration with all 0's.

Also, for each layer $q$ of $\mc{S}'$, we let $\hat{f}_{a,q}$ denote
the symmetry table (defined in Section~\ref{sec:prelim}) corresponding
to the symmetric local function $f_{a,q}$.  In other words, if $a$ has degree
$d$ in layer $q$ of $\mc{S}'$, then for each $i$, $0 \leq i \leq
d+1$, $\hat{f}_{a,q}[i]$ is the value of $f_{a,q}$ when there are
exactly $i$ inputs to $f_{a,q}$ that equal 1.

Since $\mc{S}_k$ has the transitions $\phi \, \rightarrow \, \phi$
and $a \, \rightarrow \, \phi$, in every layer $q$ of equivalent
MSyDS $\mc{S}'$, $\hat{f}_{a,q}[0] = 0$ and $\hat{f}_{a,q}[1] = 0$.
For each $i$, $1 \leq i \leq k$, $\mc{S}_k$ has the transition $a
\, b_i \, \rightarrow \, a$, so there is at least one layer of
$\mc{S}'$, say layer $q_i$, such that the graph for layer $q_i$
contains the edge $\{a,b_i\}$, and $\hat{f}_{a,q_i}[2] = 1$.  Since
$\mc{S}'$ has fewer than $k$ layers, there is a layer of $\mc{S}'$,
say layer $q$, such that there are two distinct nodes $b_i$ and
$b_j$ that are both neighbors of $a$ in that layer, and $\hat{f}_{a,q}[2]
= 1$.  Thus, $\mc{S}'$ has the transition $b_i \, b_j \, \rightarrow
\, a$.  However, $\mc{S}_k$ has the transition $b_i \, b_j \,
\rightarrow \, \phi$.  Thus, configuration $b_i \, b_j$ is an
inequivalence witness for $\mc{S}_k$ and $\mc{S}'$.  Consequently,
no such equivalent MSyDS $\mc{S}'$ exists.

However, let $\mc{S}''$ be the following two layer MSyDS with node
set $V$.  Layer 1 contains no edges, and layer 2 contains the $k$
edges $\{a,b_i\}$, $1 \leq i \leq k$.  In layer 1, the local function
for node $a$ is a threshold function with threshold 1.  In layer
2, the local function for node $a$ is a threshold function with
threshold 2.  In both layers, the local function for every node
other than $a$ is the constant function 0.  Every master function
is \texttt{AND}.  It can be verified that $\mc{S}_k$ and $\mc{S}''$
are equivalent. This concludes the proof.
\end{proof}

\section{Additional Remarks and Future Research Directions}
\label{sec:concl}

We studied some fundamental questions for MSyDSs such as the structure
of their phase spaces, the complexity of equivalence problems and
their expressive power.  Researchers have studied several other
fundamental questions for single layer dynamical systems.  Examples
of such questions are \textbf{fixed point existence} (i.e., does
the given single-layer system have a fixed point?) and \textbf{predecessor
existence} (i.e., given a single-layer system and a configuration
\calc{}, is there a configuration \calcp{} such that there is a
one-step transition from \calcp{} to \calc{}?); see
\cite{BH+06,OU-2017,rosenkrantz2024synchronous} and the references
cited therein.  Many of these problems are known to be \cnp-complete
for single-layer systems.  One can easily extend these hardness
results to MSyDSs using the following result.

\begin{proposition}\label{pro:msyds_for_syds}
Let \cals{} be a given single-layer SyDS over the
domain \bbb{} = \{0,1\}.
For any $k \geq 2$, a $k$-layer MsyDS \calsp{} over~ \bbb{}
whose phase space is identical to that of~ \cals{}
can be constructed.
\end{proposition}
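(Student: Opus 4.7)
The plan is to give a direct construction of $\calsp$ from $\cals$ that neutralizes the extra layers so that the one-step successor function of $\calsp$ coincides exactly with that of $\cals$. Since two dynamical systems on the same node set with the same one-step successor function necessarily have identical phase spaces (the phase space is completely determined by the successor map), this is all that needs to be verified.

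Concretely, let $\cals$ have node set $V$, underlying graph $G = (V, E)$, and local functions $\{g_v : v \in V\}$. I would build $\calsp$ as follows. Use the same node set $V$. For layer $1$, take the graph $G_1 = G$ and set the layer-1 local function of each node $v$ to be $f_{1,v} = g_v$. For each of the remaining layers $i$ with $2 \leq i \leq k$, let $G_i = (V, \emptyset)$ (the empty graph on $V$), and set $f_{i,v}$ to be the constant-$0$ Boolean function for every $v \in V$. Finally, for each node $v$, define the master function $\psi_v : \bbb^k \to \bbb$ as the projection onto the first coordinate, i.e., $\psi_v(x_1, x_2, \ldots, x_k) = x_1$. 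This is a well-defined Boolean function on $k$ inputs, and the construction produces a valid $k$-layer MSyDS.

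Next I would verify that the one-step transition functions of $\cals$ and $\calsp$ agree on every configuration. Fix an arbitrary configuration $\calc \in \bbb^{|V|}$ and any node $v$. Under $\calsp$, the layer-1 output is $W_v(1) = f_{1,v}(\calc) = g_v(\calc)$, which is precisely the next-state value of $v$ in $\cals$; for $i \geq 2$, the output is $W_v(i) = 0$. The master function then returns $\psi_v(W_v(1), W_v(2), \ldots, W_v(k)) = W_v(1) = g_v(\calc)$. Hence the successor of $\calc$ under $\calsp$ equals the successor of $\calc$ under $\cals$, node by node.

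Because the successor maps coincide on all of $\bbb^{|V|}$, the directed graphs $\calp(\cals)$ and $\calp(\calsp)$ have the same vertex set (all configurations) and the same set of edges (from each configuration to its unique successor), so they are identical as directed graphs. There is no real obstacle in this argument; the only minor point to note is that the projection function is indeed a legitimate Boolean master function on $k$ inputs, and that the empty graphs in layers $2, \ldots, k$ together with constant-$0$ local functions trivially satisfy the definition of an MSyDS given in Section~\ref{sse:msyds}.
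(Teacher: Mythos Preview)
Your construction is correct and the verification is complete. It differs from the paper's proof only in how the extra layers are handled: the paper makes all $k$ layers identical copies of $G$ with the same local function $g_v$ in every layer and takes each master function to be \tor{}, so that $\psi_v(W_v(1),\ldots,W_v(k)) = g_v(\calc) \vee \cdots \vee g_v(\calc) = g_v(\calc)$. You instead concentrate the content in layer~$1$ and neutralize layers $2,\ldots,k$ with empty graphs, constant-$0$ local functions, and a projection master function. Both arguments are equally short and yield identical phase spaces for the same reason (the successor map is preserved node by node). The paper's choice has the minor advantage that the master function is \tor{}, a threshold/symmetric function in keeping with the classes emphasized elsewhere in the paper, whereas projection onto the first coordinate is not symmetric for $k\geq 2$; your choice has the minor advantage that only one layer carries any structure, which makes the ``duplication'' nature of the construction explicit.
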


\begin{proof}
Let $G(V,E)$ denote the underlying graph of \cals{}.
The MSyDS \calsp{} uses the same set of nodes $V$.
For $1 \leq i \leq k$, the graph $G_i(V, E_i)$ in each layer $i$
is the same as $G$.
Further, in each layer, the local function for a node $v \in V$
is the same as the local function of $v$ in \cals{}.
Finally, the master function for each node $v$ of \calsp{}
is the \tor{} function.
It is straightforward to verify that for each configuration \calc{},
the successor of \calc{} is the same in both \cals{} and \calsp{}.
Thus, the phase spaces of \cals{} and \calsp{} are identical.
\end{proof}

In view of Proposition~\ref{pro:msyds_for_syds}, it is clear that
hardness results for single-layer systems yield
hardness results for multilayer systems as well.

We close by pointing out some directions for future research
concerning computational problems for multi-layer dynamical systems.
First, it will be useful to examine whether other problems that are
efficiently solvable for single-layer systems remain so for multilayer
systems.  For example, it is known that for single-layer SyDSs where
the treewidth of the underlying graph is bounded and the local
functions have certain properties, the existence of of certain
subgraphs of the phase space can be checked
efficiently~\cite{Rosenkrantz-etal-2015}.  It is of interest to
study whether such algorithms can be extended to the multilayer
case.  It is also of interest to investigate whether results for
single-layer SyDSs with stochastic local functions~\cite{Barrett-etal-2011}
can be extended to MSyDSs.  Our work on the expressive power of
MSyDSs considered systems where the local functions are threshold
or symmetric functions.  An interesting direction is to extend the
results on expressive power to multilayer systems with other classes
of Boolean local functions.

\section{Acknowledgments}\label{sec:ack}

We thank the reviewers of AAMAS~2025 and the editors 
for their valuable suggestions.
This work was supported by 
University of Virginia Strategic Investment
Fund award number SIF160, Virginia Department of Health
grant VDH-21-501-0135-1,
DTRA Grant HDTRA1-24-R-0028,
and NSF Grants CCF-1918656 (Expeditions) and
OAC-1916805 (CINES).
Opinions, findings, and conclusions are
those of the authors and do not necessarily reflect the view of the
funding entities.



\balance

\bibliographystyle{ACM-Reference-Format} 
\bibliography{ref_one,ref_two}


\end{document}